\documentclass[aps,prl,superscriptaddress,twocolumn,floatfix,longbibliography,nobibnotes,nofootinbib]{revtex4-2}
\usepackage{dcolumn}
\usepackage{graphicx,color}
\usepackage{amsmath,amssymb,amsfonts,dsfont,mathrsfs,amsthm}
\usepackage{braket}
\usepackage{comment}
\usepackage{bm}
\usepackage[colorlinks=true,linkcolor=blue,citecolor=blue,urlcolor=blue]{hyperref}
\usepackage{soul}
\usepackage{bbm}
\usepackage{xspace}
\usepackage{verbatim}
\usepackage{amsthm}
\usepackage{color}
\usepackage{xcolor}
\usepackage{booktabs}
\usepackage{makecell}
\usepackage{overpic}
\usepackage{eso-pic}
\usepackage{enumerate}
\usepackage{physics}
\usepackage{bm}
\usepackage{ulem}
\hypersetup{colorlinks=true,linkcolor=blue,citecolor=blue, filecolor=blue,urlcolor=blue,breaklinks=true}

\graphicspath{{fig/}}

\newcommand*{\qfi}{\mathcal{F}_\mathrm{Q}}
\newcommand*{\cfi}{\mathcal{F}_\mathrm{C}}
\newcommand*{\avgcfi}{\mathcal{F}^\mathrm{avg}_\mathrm{C}}
\newcommand*{\qfim}{\boldsymbol{\mathcal{F}}_{\mathrm{Q}}}
\newcommand*{\cfim}{\boldsymbol{\mathcal{F}}_{\mathrm{C}}}
\newcommand*{\avgcfim}{\boldsymbol{\mathcal{F}}^\mathrm{avg}_\mathrm{C}}
\newcommand*{\variance}[1]{\mathrm{Var}(#1)}
\newcommand*{\cov}[1]{\mathrm{Cov}(#1)}
\newcommand*{\ii}{\mathrm{i}}
\newcommand*{\ee}{\mathrm{e}}

\newtheoremstyle{prlrunin}
  {\medskipamount}
  {\medskipamount}
  {\normalfont}
  {0pt}
  {\itshape}
  {.---}
  {0pt}
  {}
\theoremstyle{prlrunin}
\newtheorem{theorem}{Theorem}

\begin{document}

\title{Stabilizing temporal quantum-enhanced sensitivity via sub-optimal measurements}

\author{Wangsheng Zheng}
\email{wsh.zheng@std.uestc.edu.cn}
\affiliation{Institute of Fundamental and Frontier Sciences, University of Electronic Science and Technology of China, Chengdu 611731, China}

\author{Yaoling Yang}
\email{yyaoling@std.uestc.edu.cn}
\affiliation{Institute of Fundamental and Frontier Sciences, University of Electronic Science and Technology of China, Chengdu 611731, China}

\author{Abolfazl Bayat}
\email{abolfazl.bayat@uestc.edu.cn}
\affiliation{Institute of Fundamental and Frontier Sciences, University of Electronic Science and Technology of China, Chengdu 611731, China}
\affiliation{Key Laboratory of Quantum Physics and Photonic Quantum Information, Ministry of Education, University of Electronic Science and
Technology of China, Chengdu 611731, China}

\begin{abstract}
In non-equilibrium  probes, quantum-enhanced sensitivity is quantified through super-linear scaling of precision with respect to time, as a central metrology resource. However, this requires optimal measurements, which are often complex and time-dependent, making it challenging in practice. Sub-optimal measurements typically fail to retain robust super-linear scaling and show oscillations. Here, we begin by establishing a universal  temporal scaling law in a general multi-parameter non-equilibrium quantum sensing framework. Within this setting, we identify the universal behavior of the sensing precision with sub-optimal measurements which shows quadratic scaling modulated by an additional bounded oscillatory function. To remove the oscillatory part,  we propose a protocol in which measurements are partitioned  into different groups, each performed at different times. The collective precision obtained from this protocol  stabilizes a robust quadratic scaling even for sub-optimal measurements in the multi-parameter regime. We validate our protocol through three distinct examples as well as Bayesian estimation.  Our protocol is applicable to  every informative measurement and takes a key step towards practical realization of quantum-enhanced sensitivity.
\end{abstract}

\maketitle

\textit{Introduction.---} Quantum sensing represents one of the core areas that exemplifies the superiority of quantum technologies over their classical counterparts~\cite{degen2017quantum, braun2018quantum, ye2024essay, ghosh2026journey}. The estimation precision for inferring an unknown parameter $\theta$, quantified by the variance  $\variance{\hat{\theta}}$ of any locally unbiased estimator $\hat{\theta}$, is bounded by Cram{\'e}r-Rao inequality
$\variance{\hat{\theta}}\geq 1/\mathcal{M}\cfi(\theta)\geq 1/\mathcal{M}\qfi(\theta)$, where $\mathcal{M}$ indicates the number of independent repetitions, $\cfi(\theta)$ is Classical Fisher Information (CFI) determining the bound for a given measurement and $\qfi(\theta)$ is Quantum Fisher Information (QFI) which specifies the ultimate precision bound for an optimal measurement~\cite{Helstrom1969,Paris2009,liu2020quantum,meyer2021fisher}. In the absence of quantum resources, the Fisher information scales linearly with resources, e.g. time $t$, a constraint known as the standard quantum  limit. However, quantum features such as entanglement~\cite{giovannetti2004quantum,giovannetti2011advances,giovannetti2006quantum}, squeezing~\cite{pezze2018quantum,ma2011quantum,gessner2020multiparameter,heng2025quantum}, and criticality~\cite{campos2007quantum,zanardi2008quantum,invernizzi2008optimal,di2022multiparameter,di2023critical, Mukhopadhyay2024,alushi2024optimality,alushi2025collective,hotter2024combining,mihailescu2025uncertain,mihailescu2026critical,montenegro2025review} can boost the QFI scaling to a super-linear regime, known as quantum enhanced sensitivity.

When an unknown parameter is encoded in the quantum state of the probe through unitary evolution, the QFI generally scales as $\qfi{\sim} t^2$, indicating  quantum-enhanced sensitivity~\cite{Boixo2007,Pang2014}.
Non-equilibrium quantum sensing has been a subject of great interest in both theory~\cite{ravell2024strongly,mirkhalaf2024operational,Boixo2007,Pang2014,yang2025overcoming,yang2023extractable,aiache2024non,chu2021dynamic,yuan2017quantum,yousefjani2025discrete,yousefjani2025discretepra,gribben2025boundary,iemini2024floquet,gietka2022understanding,Lovett2013Differential,baak2024self,salvia2023critical,Puig2025,annabestani2022multiparameter,manshouri2025quantum,di2024metrology,mihailescu2025quantumsensing} and experiment~\cite{liu2021experimental,yu2025experimental,li2026nonequilibrium,xiao2026observation,Beaulieu2025Criticality,ding2022enhanced,liu2022deep,liu2601enhanced,tong2025topological,xiao2024non}.  In closed systems, despite the favorable quadratic temporal scaling of the QFI, achieving quantum-enhanced sensitivity in practice remains challenging and limited to small systems and short time scales~\cite{liu2021experimental,yu2025experimental,li2026nonequilibrium}. A key obstacle  is the necessity of using an optimal measurement basis that is often time-dependent and relies on highly entangled and non-local operations~\cite{Helstrom1969,Paris2009,liu2020quantum,meyer2021fisher}. For practically available sub-optimal measurements the CFI does not necessarily scale quadratically in time and often fluctuates significantly~\cite{li2026nonequilibrium}. Moreover, the fluctuations usually depend on the unknown parameter which makes it impossible to design  a sensing strategy that results in good precision for all range of values. Several open questions arise: (i) beyond the single-parameter sensing, can one identify universal scaling laws for multi-parameter sensing setups? (ii) is there a universal form for the scaling of CFI with sub-optimal measurements? and if so,  (iii) is it possible to stabilize quadratic temporal scaling even when measurement is sub-optimal?

In this Letter, we address the above questions. We first show that in a multi-parameter quantum sensing scenario, the eigenvalues of the QFI matrix exhibit temporal quadratic scaling, provided that a certain commonly satisfied condition holds. Under this condition, we demonstrate that for a sub-optimal measurement in the long-time limit, the precision bound dictated by the CFI matrix scales as ${\sim} f(\boldsymbol{\theta},t) t^2$, where $0{\le}f(\boldsymbol{\theta},t){\le} B$ is a bounded nonnegative function that generally oscillates in time. Such oscillations hinder robust sensing across the entire parameter space $\boldsymbol{\theta}$. Motivated by this observation, we propose a protocol in which measurements are divided into different groups each performed at different times. In this framework, we show that the Cram{\'e}r-Rao bound is governed by time-averaged CFIs, which effectively smooth out the oscillatory contributions and restore a stable $t^2$ scaling. The effectiveness of our protocol is demonstrated across three paradigmatic models. Finally, by exploiting Bayesian estimation we confirm that the precision tightly approaches the theoretical predicted  limit.

\textit{Estimation theory.---} A quantum probe state $\rho(\boldsymbol{\theta})$ encodes unknown parameters $\boldsymbol{\theta}{=}(\theta_1,\ldots,\theta_m)$ to be estimated. A measurement is described by a POVM $\{\Pi_x\}$ with outcome probabilities $p_x{\equiv}p(x|\boldsymbol{\theta}){=}\Trace[\rho(\boldsymbol{\theta})\Pi_x]$. The uncertainty, quantified by the covariance matrix $\cov{\hat{\boldsymbol{\theta}}}$ of any locally unbiased estimator $\hat{\boldsymbol{\theta}}$, satisfies the multi-parameter quantum Cram\'er-Rao theorem~\cite{Helstrom1969,Paris2009,liu2020quantum,meyer2021fisher} 
\begin{equation}\label{eq:mcrb}
\cov{\hat{\boldsymbol{\theta}}}\,\succeq\, \frac{\cfim(\boldsymbol{\theta})^{-1}}{\mathcal{M}}\,\succeq\, \frac{\qfim(\boldsymbol{\theta})^{-1}}{\mathcal{M}},
\end{equation}
where $\cfim(\boldsymbol{\theta})$ is the CFI matrix of the POVM, with elements $[\cfim(\boldsymbol{\theta})]_{ij}{=}\sum_x p_x^{-1}(\partial_{\theta_i} p_x)(\partial_{\theta_j} p_x)$, and $\qfim(\boldsymbol{\theta})$ is the QFI matrix, which upper bounds $\cfim(\boldsymbol{\theta})$ for every POVM~\cite{Braunstein1994,liu2020quantum}.  
One can use a positive-semidefinite weight matrix $\boldsymbol{\mathcal{W}}{\ge}0$ to get a scalar inequality as~\cite{Ragy2016}
\begin{equation}\label{eq:qcrb_scalar}
\Trace[\boldsymbol{\mathcal{W}}\cov{\hat{\boldsymbol{\theta}}}] {\geq } 
\frac{\Trace[\boldsymbol{\mathcal{W}}\cfim(\boldsymbol{\theta})^{-1}]}{\mathcal{M}} {\geq} 
\frac{\Trace[\boldsymbol{\mathcal{W}}\qfim(\boldsymbol{\theta})^{-1}]}{\mathcal{M}}.
\end{equation}
For the sake of notation simplicity, we define classical and quantum precision bounds $\mathscr{P}_{\mathrm{C}}$ and $\mathscr{P}_{\mathrm{Q}}$ which will be used throughout this paper for both single- and multi-parameter sensing. In the case of single-parameter sensing, $\mathscr{P}_{\mathrm{C}}{=}\cfi$ and $\mathscr{P}_{\mathrm{Q}}{=}\qfi$ while in the case of multi-parameter sensing 
$\mathscr{P}_{\mathrm{C}}{=}1/\Trace[\boldsymbol{\mathcal{W}}\cfim(\boldsymbol{\theta})^{-1}]$ and $\mathscr{P}_{\mathrm{Q}}{=}1/\Trace[\boldsymbol{\mathcal{W}}\qfim(\boldsymbol{\theta})^{-1}]$.

\textit{Non-equilibrium quantum sensing.---} Consider a quantum probe whose Hamiltonian $H(\boldsymbol{\theta}){=}\sum_\alpha E_\alpha P_\alpha$  depends on $m$ unknown parameters $\boldsymbol{\theta}$. Here, $P_\alpha$ denotes the projector onto the eigenspace of $H(\boldsymbol{\theta})$ corresponding to the eigenvalue $E_\alpha$. The parameters $\boldsymbol{\theta}$ are encoded into a quantum state via unitary evolution $\rho_{\boldsymbol{\theta}}(t)
{=}U_{\boldsymbol{\theta}}(t)\rho_0U_{\boldsymbol{\theta}}^\dagger(t)$, where $U_{\boldsymbol{\theta}}(t){=}\ee^{-\ii H(\boldsymbol{\theta})t}$ and $\rho_0$ is the initial state of the system which does not depend on $\boldsymbol{\theta}$.
In the case of a single-parameter sensing, i.e. $m{=}1$, the QFI is bounded by $\qfi{\le} t^2 \|\partial_\theta H(\theta)\|^2$, where $\|\cdot\|$ represents the difference between the largest and the smallest eigenvalues~\cite{Boixo2007,Pang2014,Puig2025}. 
The bound can be saturated for the Hamiltonians of the form $H(\theta){=}\theta H_0$ with an optimal initial state prepared in the equal-weight superposition of the eigenstates corresponding to extremal eigenvalues of $H_0$.
For the multi-parameter case, we put forward the following theorem.

\begin{theorem}\label{thm:scaling}
Let $H(\boldsymbol{\theta})$ be a finite-dimensional, time-independent Hamiltonian differentiable at the true parameter point, and let $\rho_0$ be a $\boldsymbol{\theta}$-independent initial state.
Assume $A_{\boldsymbol{u}}{:=}\partial_{\boldsymbol u}H(\boldsymbol{\theta})$ is the directional derivative of the Hamiltonian in the parameter space along a real unit vector $\boldsymbol{u}$, where $\partial_{\boldsymbol u}{:=}\boldsymbol{u}\vdot\grad_{\boldsymbol{\theta}}$.
Then every eigenvalue of $\qfim(\boldsymbol{\theta},t)$ grows quadratically in time if and only if
\begin{equation}\label{eq:gap-gradient-span}
\mathcal{S}\big[\rho_0,\mathcal{D}_H(A_{\boldsymbol{u}})\big]>0,\quad \forall \boldsymbol u\in\mathbb R^m,\ \|\boldsymbol{u}\|_2=1.
\end{equation}
Here, $\mathcal{D}_H(\cdot){=}\sum_\alpha P_\alpha(\cdot)P_\alpha$ denotes the pinching map with respect to $H(\boldsymbol{\theta})$ and $\mathcal{S}(\rho,\zeta){=}{-}\Tr([\sqrt{\rho},\zeta]^2)/2$ is the Wigner-Yanase skew information, which reduces to the variance of $\zeta$ for pure state $\rho$~\cite{Luo2003}.
\end{theorem}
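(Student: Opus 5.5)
The idea is to reduce the claim about all eigenvalues of $\qfim(\boldsymbol{\theta},t)$ to a statement about every direction $\boldsymbol u$. Since the QFI matrix is real symmetric and positive semidefinite, its smallest eigenvalue is $\lambda_{\min}(t)=\min_{\|\boldsymbol u\|_2=1}\boldsymbol u^{T}\qfim\boldsymbol u$. The quadratic form $\boldsymbol u^T\qfim\boldsymbol u$ is exactly the single-parameter QFI $\mathcal F_{\boldsymbol u}(t)$ for a displacement of $\boldsymbol\theta$ along $\boldsymbol u$.

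The upper bound is immediate from the single-parameter bound quoted above: $\mathcal F_{\boldsymbol u}\le t^2\|A_{\boldsymbol u}\|^2$. Hence $\lambda_{\max}\le t^2\max_{\boldsymbol u}\|A_{\boldsymbol u}\|^2$, so no eigenvalue grows faster than $t^2$. What remains is to show that $\lambda_{\min}\gtrsim c\,t^2$ with $c>0$ if and only if condition~\eqref{eq:gap-gradient-span} holds.

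The core computation is the asymptotics of the local generator. Using the standard Duhamel formula, $\partial_{\boldsymbol u}U_{\boldsymbol\theta}(t)=-\ii\,U_{\boldsymbol\theta}(t)\int_0^t \ee^{\ii Hs}A_{\boldsymbol u}\ee^{-\ii Hs}ds$. Write $A_{\boldsymbol u}=\sum_{\alpha\beta}P_\alpha A_{\boldsymbol u}P_\beta$ and integrate term by term. This gives $G_{\boldsymbol u}(t):=\int_0^t \ee^{\ii Hs}A_{\boldsymbol u}\ee^{-\ii Hs}ds = t\,\mathcal D_H(A_{\boldsymbol u})+R_{\boldsymbol u}(t)$, where $R_{\boldsymbol u}(t)=\sum_{\alpha\neq\beta}\frac{\ee^{\ii(E_\alpha-E_\beta)t}-1}{\ii(E_\alpha-E_\beta)}P_\alpha A_{\boldsymbol u}P_\beta$. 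Because the dimension is finite, the spectral gaps are bounded below by $\Delta_{\min}>0$, so $\|R_{\boldsymbol u}(t)\|\le 2\|A_{\boldsymbol u}\|/\Delta_{\min}$ uniformly in $t$. Linearity in $\boldsymbol u$ and compactness of the unit sphere make this bound uniform in $\boldsymbol u$ as well. One subtlety needs care: the derivative of $H$ can also move the projectors $P_\alpha$, but this is already absorbed in the Duhamel formula, since only $A_{\boldsymbol u}=\partial_{\boldsymbol u}H$ appears. Since $\rho_{\boldsymbol\theta}(t)$ is a unitary orbit, its QFI along $\boldsymbol u$ equals the QFI of $\rho_0$ with generator $G_{\boldsymbol u}(t)$.

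Next I sandwich the QFI between skew informations. For the two bounds I use the known inequality $4\mathcal S(\rho,G)\le \mathcal F_Q(\rho,G)\le 8\mathcal S(\rho,G)$ (Luo; here $\mathcal F_Q$ is normalized so that it equals $4\,\mathrm{Var}$ for pure states). The map $G\mapsto\sqrt{\mathcal S(\rho,G)}=\|[\sqrt\rho,G]\|_2/\sqrt2$ is a seminorm, so the triangle inequality gives
\[
\Big|\sqrt{\mathcal S(\rho_0,G_{\boldsymbol u}(t))}-t\sqrt{\mathcal S(\rho_0,\mathcal D_H(A_{\boldsymbol u}))}\Big|\le \sqrt{\mathcal S(\rho_0,R_{\boldsymbol u}(t))}\le C,
\]
with $C$ independent of $t$ and $\boldsymbol u$.

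For the direction "condition implies quadratic growth", note that $\boldsymbol u\mapsto \mathcal S(\rho_0,\mathcal D_H(A_{\boldsymbol u}))$ is continuous on the compact sphere. If it is strictly positive everywhere, it has a minimum $s_{\min}>0$. Then $\lambda_{\min}(t)\ge 4(t\sqrt{s_{\min}}-C)^2\sim 4s_{\min}t^2$, so every eigenvalue lies between two constant multiples of $t^2$ at large $t$.

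For the converse, suppose the condition fails for some $\boldsymbol u_0$. Then $\mathcal S(\rho_0,G_{\boldsymbol u_0}(t))\le C^2$, so $\boldsymbol u_0^T\qfim\boldsymbol u_0\le 8C^2$ stays bounded. This forces $\lambda_{\min}(t)$ to be bounded, so that eigenvalue does not grow quadratically.

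The main obstacle is the uniform control of the oscillating remainder together with a correct choice of the QFI–skew-information comparison for mixed $\rho_0$. If the factor-of-two sandwich turns out to be awkward, I would fall back on the convexity bound $\mathcal F_Q\le 4\mathrm{Var}$ via a purification for the upper side. For the lower side I would use the Wigner–Yanase lower bound, which is all the seminorm triangle argument needs.
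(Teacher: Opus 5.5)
Your proof is correct. It shares the paper's first step, the Wilcox/Duhamel splitting $\mathcal K_{\boldsymbol u}(t)=t\,\mathcal D_H(A_{\boldsymbol u})+R_{\boldsymbol u}(t)$ with $R_{\boldsymbol u}$ bounded in $t$, but the two arguments diverge after that.

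The paper inserts $\partial_{\theta_i}\rho_{\boldsymbol\theta}(t)=tX_i(t)+O(1)$ into the SLD formula. This gives the explicit expansion $\qfim(\boldsymbol\theta,t)=t^2\boldsymbol{\Gamma}_{\rm Q}+O(t)$, so the eigenvalues of $\qfim/t^2$ converge to those of $\boldsymbol{\Gamma}_{\rm Q}$. It then observes that $\boldsymbol u^{\mathsf T}\boldsymbol{\Gamma}_{\rm Q}\boldsymbol u=0$ exactly when $[\mathcal D_H(A_{\boldsymbol u}),\rho_0]=0$, which is equivalent to vanishing skew information.

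You never compute the leading coefficient. Luo's sandwich $4\mathcal S\le\mathcal F_{\mathrm Q}\le 8\mathcal S$ and the seminorm property of $\sqrt{\mathcal S(\rho_0,\cdot)}$ turn the splitting directly into two-sided bounds, and compactness of the unit sphere supplies uniformity in $\boldsymbol u$.

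What each route buys:
\begin{itemize}
\item The paper's route yields the exact limit, so $\lambda_k(t)/t^2$ actually converges. That is the stronger reading of ``grows quadratically''.
\item Yours gives only $\Theta(t^2)$ bounds, with up to a factor-of-two slack.
\item In exchange, yours lands on the skew-information criterion without the intermediate commutator step.
\item In the degenerate case, yours shows that the QFI along the bad direction stays $O(1)$. This is sharper than the $O(t)$ remainder of the paper's expansion.
\end{itemize}

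You can recover the exact coefficient inside your framework. Apply the same triangle inequality to the seminorm $G\mapsto\sqrt{\mathcal F_{\mathrm Q}(\rho_0,G)}$, which is a weighted Hilbert--Schmidt norm of the off-diagonal entries of $G$. This gives $\boldsymbol u^{\mathsf T}\qfim\boldsymbol u/t^2\to\mathcal F_{\mathrm Q}(\rho_0,\mathcal D_H(A_{\boldsymbol u}))=\boldsymbol u^{\mathsf T}\boldsymbol{\Gamma}_{\rm Q}\boldsymbol u$ uniformly in $\boldsymbol u$.

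One small slip: the bound $\|R_{\boldsymbol u}(t)\|\le 2\|A_{\boldsymbol u}\|/\Delta_{\min}$ holds with Hilbert--Schmidt norms on both sides, but it is not automatic in operator norm. Blockwise Schur multiplication need not be contractive there. Nothing breaks, though: you only need a $t$-uniform bound, and the Hilbert--Schmidt one is exactly what controls $\mathcal S(\rho_0,R_{\boldsymbol u})\le 2\|R_{\boldsymbol u}\|_2^2$.
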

\begin{proof}
    See Sec.~\ref{sec:SM_thm1} of the SM~\cite{SM}.
\end{proof}

The  above theorem is our first major result. The condition in Eq.~(\ref{eq:gap-gradient-span}) implies that $\rho_0$ and $\mathcal{D}_H(A_{\boldsymbol{u}})$ are noncommuting which is satisfied in most practical cases, as we will exemplify later.
While the QFI defines the ultimate precision limit, saturating it requires performing measurements in an optimal basis. In practice, however, such measurements may be prohibitively complex or, in the case of multi-parameter sensing, may not even exist due to the incompatibility issue~\cite{Ragy2016,Albarelli2020}. 
Therefore, it is highly desirable to investigate the precision  bounds for a generic sub-optimal measurement basis. 
The following theorem addresses this affirmatively.

\begin{theorem}\label{thm:two}
Consider the setting of Theorem~\ref{thm:scaling} such that every QFI eigenvalue grows quadratically in time. For any nonzero $\mathcal{W}{\succeq} 0$ and any fixed finite-outcome POVM that is informative about the parameters we get 
\begin{equation}\label{eq:fixed-povm-envelope}
\limsup_{t\to\infty}
\frac{1}{t^2\,\Trace[\boldsymbol{\mathcal W}\cfim(\boldsymbol{\theta},t)^{-1}]}>0 .
\end{equation}
\end{theorem}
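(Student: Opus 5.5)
The plan is to show that, along a suitable sequence of times $t_k\to\infty$, the rescaled CFI matrix $\cfim(\boldsymbol\theta,t_k)/t_k^2$ converges to a positive definite limit. Then $t_k^2\Trace[\boldsymbol{\mathcal W}\cfim(\boldsymbol\theta,t_k)^{-1}]$ converges to $\Trace[\boldsymbol{\mathcal W}\mathbf{G}^{-1}]$, which is finite and positive, and this gives Eq.~(\ref{eq:fixed-povm-envelope}).

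The first step is to expand the outcome probabilities. Let $H=\sum_\alpha E_\alpha P_\alpha$. Using the standard perturbation formula for the derivative of the exponential,
\begin{equation*}
\partial_{\boldsymbol u}\rho_{\boldsymbol\theta}(t) = -\ii t\,[\mathcal{D}_H(A_{\boldsymbol u}),\rho_{\boldsymbol\theta}(t)] + R_{\boldsymbol u}(t),
\end{equation*}
where $R_{\boldsymbol u}(t)$ collects the off-diagonal (gap) contributions. Each block $P_\alpha A_{\boldsymbol u}P_\beta$ with $\alpha\neq\beta$ enters through terms like $(\ee^{-\ii(E_\alpha-E_\beta)t}-1)/(E_\alpha-E_\beta)$. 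These are bounded because the dimension is finite and the gaps are nonzero, so $\|R_{\boldsymbol u}(t)\|=O(1)$ uniformly in $t$ and linearly in $\boldsymbol u$.

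It follows that
\begin{equation*}
\partial_{\boldsymbol u}p_x = t\,g_x(\boldsymbol u,t)+O(1), \qquad g_x(\boldsymbol u,t)=-\ii\Tr\big([\mathcal{D}_H(A_{\boldsymbol u}),\rho(t)]\Pi_x\big).
\end{equation*}
Both $p_x(t)$ and $g_x(\boldsymbol u,t)$ are finite trigonometric sums in the Bohr frequencies $E_\alpha-E_\beta$, so they are almost periodic in $t$.

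The second step chooses the times. By joint almost periodicity (Kronecker/Dirichlet simultaneous approximation of the finitely many frequencies), there exist $t_k\to\infty$ with $\rho_{\boldsymbol\theta}(t_k)\to\rho_0$, since $\ee^{-\ii(E_\alpha-E_\beta)t_k}\to1$ for all $\alpha,\beta$. Along this sequence $p_x(t_k)\to p_x^{(0)}=\Tr[\rho_0\Pi_x]$ and $g_x(\boldsymbol u,t_k)\to g^{(0)}_x(\boldsymbol u)=-\ii\Tr([\mathcal{D}_H(A_{\boldsymbol u}),\rho_0]\Pi_x)$.

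Outcomes with $p_x^{(0)}=0$ need separate treatment, because $p_x^{-1}$ blows up. I would drop them, which is legitimate since $\cfim$ is a sum of positive semidefinite terms; dropping terms only lowers $\cfim$ and only strengthens a lower bound on $1/\Trace[\boldsymbol{\mathcal W}\cfim^{-1}]$. The remaining terms converge, giving
\begin{equation*}
\liminf_k \frac{\boldsymbol u^{T}\cfim(t_k)\boldsymbol u}{t_k^2} \ge \sum_{x:\,p_x^{(0)}>0}\frac{g_x^{(0)}(\boldsymbol u)^2}{p_x^{(0)}} =: \boldsymbol u^{T}\mathbf{G}\boldsymbol u .
\end{equation*}

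The third and main step is proving that $\mathbf G\succ0$. This requires $g^{(0)}(\boldsymbol u)\neq0$ for every unit $\boldsymbol u$. I would interpret the hypothesis that the POVM is ``informative about the parameters'' as exactly this statement: the POVM statistics are locally sensitive to the generator $\mathcal{D}_H(A_{\boldsymbol u})$ in every direction. In that reading, the condition of Theorem~\ref{thm:scaling} guarantees that the generator acts nontrivially on $\rho_0$, and informativeness guarantees that the POVM detects it.

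If informativeness is only assumed at generic times, I would instead choose $t_k$ approximating a time $t_*$ at which $\{g_x(\boldsymbol u,t_*)\}_x$ distinguishes all directions. Such a $t_*$ would come from analyticity of the trigonometric polynomials in $t$ together with compactness of the unit sphere. The same recurrence argument, applied now to $\rho(t_*)$ via $\ee^{-\ii(E_\alpha-E_\beta)(t_k-t_*)}\to1$, then gives convergence to $\rho(t_*)$. This uniformity over the sphere is the delicate point.

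Once $\mathbf G\succ0$, I conclude as follows. The quadratic forms satisfy $\cfim(t_k)/t_k^2\succeq\mathbf G-o(1)$ uniformly on the unit sphere, because the convergence is of finitely many matrices. Matrix inversion is operator monotone, so $t_k^2\Trace[\boldsymbol{\mathcal W}\cfim(t_k)^{-1}]\le\Trace[\boldsymbol{\mathcal W}(\mathbf G-o(1))^{-1}]\to\Trace[\boldsymbol{\mathcal W}\mathbf G^{-1}]<\infty$. Taking reciprocals yields a strictly positive $\limsup$ in Eq.~(\ref{eq:fixed-povm-envelope}).
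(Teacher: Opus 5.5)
Your proposal is essentially the paper's proof. It runs recurrence along $t_n=t_n'+t_\star$ back to the informative time, drops outcomes with vanishing limiting probability, bounds $\cfim(\boldsymbol\theta,t_n)/t_n^2\succeq\mathbf G-o(1)$ with $\mathbf G\succ0$, and inverts; the paper differs only in discarding your weights $1/p_x$ via $p_x\le 1$. Your fallback with $t_*$ is exactly the paper's definition, which \emph{assumes} a single $t_\star$ valid for all $\boldsymbol u$. So no analyticity or compactness step is needed, and none could supply one: a two-outcome POVM with $m=2$ can detect each direction at some time, yet its $\cfim$ has rank one at every time.

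The one step you should make explicit is that informativeness survives the restriction to positive-probability outcomes. If $\Trace[\Pi_x\rho_{\boldsymbol\theta}(t_\star)]=0$, then $\Pi_x\rho_{\boldsymbol\theta}(t_\star)=\rho_{\boldsymbol\theta}(t_\star)\Pi_x=0$, hence $g_x(\boldsymbol u,t_\star)=\Trace[\Pi_xX_{\boldsymbol u}(t_\star)]=0$. The outcome detecting $\boldsymbol u$ therefore has $p_x(t_\star)>0$, which gives $\boldsymbol u^{\mathsf T}\mathbf G\boldsymbol u>0$.
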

\begin{proof}
    See Sec.~\ref{sec:SM_thm2} of the SM~\cite{SM}.
\end{proof}

The direct implication of the above theorems is 
\begin{equation} \label{eq:multiparams-bound}
f(\boldsymbol{\theta},t)t^2=\mathscr{P}_{\mathrm{C}}(\boldsymbol{\theta},t) \le \mathscr{P}_{\mathrm{Q}}(\boldsymbol{\theta},t) \le Bt^2
\end{equation}
where $B$ is a constant which reduces to $\|\partial_\theta H(\theta)\|^2$ for single-parameter sensing, and $0{\le} f(\boldsymbol{\theta},t){\le} B$ is a time-dependent bounded function (see Sec.~\ref{sec:SM_discussion_thm} of the SM~\cite{SM}). 
The equality $f(\boldsymbol{\theta},t)t^2{=}\mathscr{P}_{\mathrm{C}}(\boldsymbol{\theta},t)$ in Eq.~(\ref{eq:multiparams-bound}) is the second major result of our paper.  The key issue is that   $f(\boldsymbol{\theta},t)$ may show oscillatory behavior and take a value close to zero at the interrogation time, leading to  poor estimation precision and missing $t^2$ scaling. In the following, we provide a recipe to address this issue and restore quadratic temporal scaling  for any given informative measurement.

\textit{Stabilizing quadratic scaling.---} To guarantee temporal quadratic scaling for $\mathscr{P}_{\mathrm{C}}$ in Eq.~(\ref{eq:multiparams-bound}), one must stabilize $f(\boldsymbol{\theta},t)$ so that its time variation remains sufficiently slow, thereby rendering $t^2$ the dominant scaling term. In order to achieve this, we keep the measurement basis fixed and divide the total measurement sample $\mathcal{M}$ into $K$ independent groups, each performed at a different time $t_\nu$ and containing $\mathcal{M}{/}K$ measurements. Since the  measurements at different times are independent from each other, they generate $K$ independent probability distributions given by $\{ p_x^{(\nu)}(\boldsymbol{\theta})=\Trace[\Pi_x\rho_{\boldsymbol{\theta}}(t_\nu)]\}_{\nu=1}^{K}$.
Therefore, the probability of any outcome is determined as $p_{(x_1,x_2,\cdots,x_K)}=\prod_{\nu=1}^{K}p_{x_{\nu}}^{(\nu)}$ and thus its corresponding CFI matrix becomes $K\avgcfim(\boldsymbol{\theta})$ in which $\avgcfim(\boldsymbol{\theta})     {=} \frac{1}{K}\sum_{\nu=1}^{K}     \cfim^{(\nu)}(\boldsymbol{\theta})$. Substituting this into  Eq.~\eqref{eq:qcrb_scalar} gives the configuration-averaged Cram\'er-Rao bound
\begin{equation}\label{eq:avgcfim_crb}
\Trace[
\boldsymbol{\mathcal W}\,
\cov{\hat{\boldsymbol{\theta}}}
]
{\geq}
\frac{1}{\mathcal{M}}\Trace[
\boldsymbol{\mathcal W}\,
\avgcfim(\boldsymbol{\theta})^{-1}]{=}\frac{1}{\mathcal{M}\mathscr{P}_{\mathrm{C}}^{\rm avg}(\boldsymbol{\theta},t)}
\end{equation}
where $\mathscr{P}_{\mathrm{C}}^{\rm avg}(\boldsymbol{\theta},t){=}1/\Trace[
\boldsymbol{\mathcal W}\,
\avgcfim(\boldsymbol{\theta})^{-1}]$.
Eq.~(\ref{eq:avgcfim_crb}) is the third major result of our paper. 

As we demonstrate below, the time-averaged quantity $\mathscr{P}_{\mathrm{C}}^{\rm avg}(\boldsymbol{\theta},t)$ exhibits greater stability than any individual 
$\mathscr{P}_{\mathrm{C}}(\boldsymbol{\theta},t_\nu)$ and retains a robust $t^2$ scaling.  This increased robustness stems from the fact that oscillations or ill-conditioning at any given time can be compensated by data from other sampling instants. 
For the sake of simplicity and without loss of generality, we choose $K$ equally spaced sampling times $t_\nu=t{+}(2\nu-1-K)\tau/2$, with $\nu{=}1,\ldots,K$, which form an array centered at $t$ with spacing $\tau$ between consecutive samples. 
To make sure that the average sampling time is $t$, all sampling times must lie within $[0,2t]$. Therefore, the maximum number of time points would be $K_{\rm max}{=}\lfloor2t/\tau\rfloor{+}1$.
In the following, we illustrate this strategy on three models, covering both single- and multi-parameter estimation.

\begin{figure}[!b]
\centering
\includegraphics[width=1\linewidth]{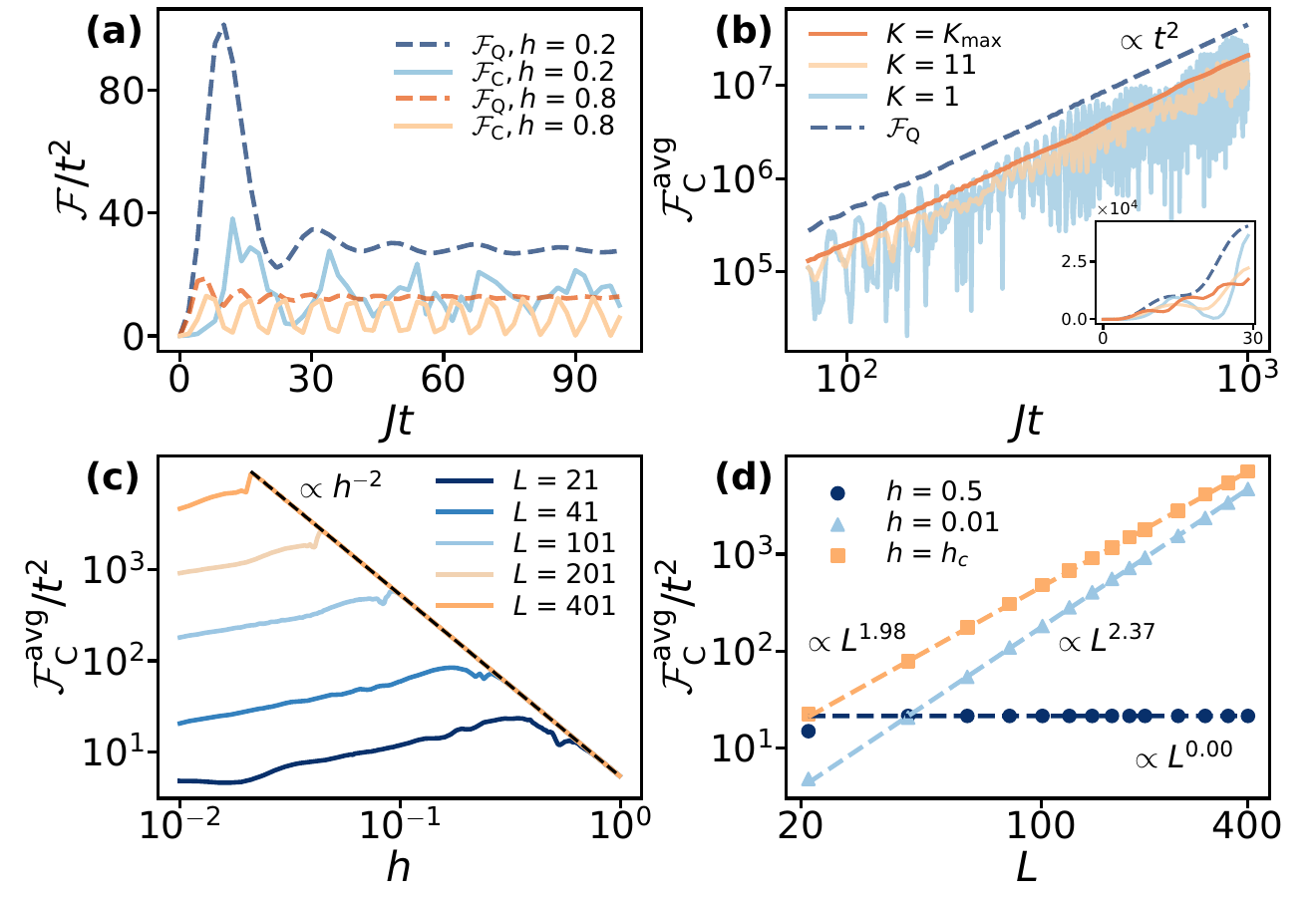}
\caption{Single-excitation Stark model. (a) $\qfi/t^2$ and $\cfi/t^2$ versus time for $h{=}0.2$ and $h{=}0.8$ with $L{=}21$. (b) The averaged CFI versus time for different $K$ at $h{=}0.4$, with $\tau{=}1$ and $L{=}21$. (c) $\avgcfi/t^2$ versus $h$ for various $L$ and (d) its scaling with $L$ across regimes, at $t{=}5000$, $\tau{=}1$, and $K{=}K_{\rm max}$.  \label{fig:stark}}
\end{figure}

\textit{Example 1: Single-excitation Stark chain.---}We first consider a one-dimensional chain of size $L$ subject to a linear gradient field $h$ to be estimated, described by the Hamiltonian
\begin{equation}\label{eq:stark}
H = -J\sum_{l=1}^{L-1}(\ket{l}\bra{l+1}+\ket{l+1}\bra{l})+h\sum_{l=1}^{L}l\ket{l}\bra{l},
\end{equation}
where $J{=}1$ is the exchange coupling, and $\ket{l}$ denotes the single-excitation state localized at site $l$. For finite $L$, this model exhibits a Stark-localization crossover around $h_c{\simeq}8J/L$, separating localized and extended regimes, across the entire spectrum~\cite{manshouri2025quantum,He2023}. We take $L$ to be odd and initialize the probe with a single excitation at the middle of the chain, $\ket{(L+1)/2}$. The measurement is a simple position measurement with $\{ \Pi_l=\ket{l}\bra{l}\}$.

We compute the QFI and the CFI for a chain of size $L{=}21$ in the extended ($h{=}0.2$) and localized ($h{=}0.8$) regimes.
As shown in Fig.~\ref{fig:stark}(a), both quantities, normalized by $t^2$, exhibit a transient peak at short times.
The normalized QFI then saturates to a constant, confirming the $t^2$ scaling, while the CFI oscillates below it.
Near the crossover, at $h{=}0.4$, Fig.~\ref{fig:stark}(b) shows the averaged CFI for several measurement groups $K$ with $\tau{=}1$ on a log-log scale. As $K$ increases, the curves become progressively smoother, with $K{=}1$ recovering the bare CFI. When $K{=}K_{\rm max}$ the averaged CFI grows parallel to the QFI and numerically recovers a smooth $t^2$ trend.
In Fig.~\ref{fig:stark}(c), the normalized averaged CFI is plotted as a function of $h$ for various system sizes considering $t{=}5000$, $\tau{=}1$, and $K{=}K_{\rm max}$. As the figure shows, the averaged CFI peaks around the crossover and converges to the $h^{-2}$ behavior in the localized phase, fully consistent with the QFI behavior studied in Ref.~\cite{manshouri2025quantum}.  In Fig.~\ref{fig:stark}(d), we plot the normalized averaged CFI as a function of system size $L$ for the same setting. The scaling confirms $\avgcfi\sim t^2 L^\beta$ in which $\beta \approx 2$ in the extended phase and $\beta \simeq 0$ in the localized phase. Indeed, a remarkable property of the Stark chain in the extended phase is its quantum-enhanced sensitivity with respect to both size and time.

\begin{figure}[!b]
\centering
\includegraphics[width=1\linewidth]{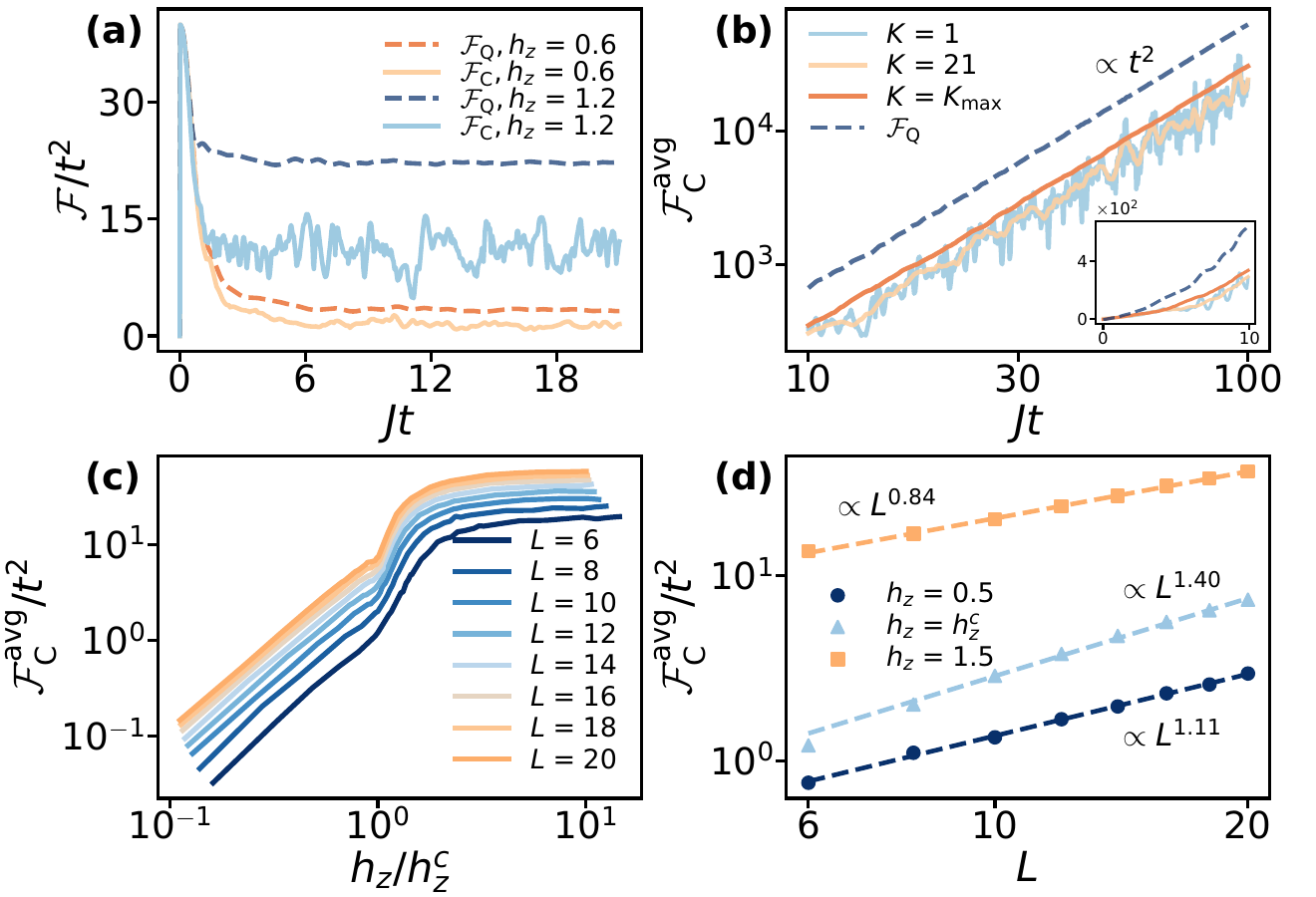}
\caption{Transverse-field Ising chain. (a) $\qfi/t^2$ and $\cfi/t^2$ versus time for $h_z{=}0.6$ and $h_z{=}1.2$ with $L{=}10$. (b) The averaged CFI versus time for different $K$ at $h_z{=}0.8$, with $\tau{=}0.1$ and $L{=}10$. (c) $\avgcfi/t^2$ versus $h_z/h_z^c$ for various $L$ and (d) its scaling with $L$ across regimes, at $t{=}100$, $\tau{=}1$, and $K{=}K_{\rm max}$. \label{fig:tfim}}
\end{figure}

\textit{Example 2: Transverse Ising chain.---} Consider an Ising chain of $L$ spin-$1/2$ particles  interacting through the Hamiltonian 
\begin{equation}\label{eq:tfim}
H=-J\sum_{l=1}^{L-1}\sigma^l_x\sigma^{l+1}_x-h_x\sum_{l=1}^L\sigma^l_x-h_z\sum_{l=1}^L\sigma^l_z.
\end{equation}
Here, $\sigma_\alpha^l$ $(\alpha{=}x,z)$ are the Pauli operators at site $l$, $J$  denotes the exchange coupling and is set to be 1, and $h_z$ and $h_x$ are transverse and longitudinal fields, respectively. We first focus on transverse Ising chain with $h_x{=}0$ and the goal is to estimate $h_z$. At zero temperature and in the thermodynamic limit ($L{\rightarrow}\infty$), the ground state of the system exhibits a quantum phase transition at the critical point ($h_z{=}h_z^c{=}{\pm} J$)~\cite{Pfeuty1970}. The many-body probe is initialized in the state $\ket{++\ldots+}$ and then freely evolves under the action of Hamiltonian $H$. Then local $\sigma_x$ measurements are subsequently performed on each spin.

The time evolution of the normalized QFI and CFI is shown in Fig.~\ref{fig:tfim}(a) for $L{=}10$ in the ferromagnetic ($h_z{=}0.6$) and paramagnetic ($h_z{=}1.2$) phases.
After transient peaks and oscillations, the normalized QFI saturates, confirming the $t^2$ scaling, and takes a larger value in the paramagnetic phase. In contrast, the normalized CFI oscillates below the QFI around a smaller mean. 
Near criticality, at $h_z{=}0.8$, Fig.~\ref{fig:tfim}(b) shows the performance of the protocol for several grouping numbers with $\tau{=}0.1$. The averaged CFI becomes progressively smoother as $K$ increases, and numerically recovers the quadratic trend when the time array spans twice the central time. 
The dependence of the averaged CFI on the field strength and the probe size is characterized in Figs.~\ref{fig:tfim}(c) and \ref{fig:tfim}(d), at $t{=}100$, $\tau{=}1$, and $K{=}K_{\rm max}$. In Fig.~\ref{fig:tfim}(c), the averaged CFI grows smoothly with $h_z/h_z^c$, where $h_z^c$ is the finite-size critical point determined by maximizing the ground-state QFI, and saturates deep in the paramagnetic phase. In Fig.~\ref{fig:tfim}(d), the averaged CFI follows the scaling $\avgcfi/t^2{\propto}L^\beta$ for $L{\in}[6,20]$, with $\beta{\approx}1.11$ at $h_z{=}0.5$, $\beta{\approx}1.40$ at $h_z{=}h_z^c$, and $\beta{\approx}0.84$ at $h_z{=}1.5$. Hence the exponent is largest at criticality. A Jordan-Wigner analysis in Sec.~\ref{sec:SM_spin_examples} of the SM~\cite{SM} indicates that the non-equilibrium  QFI shows quantum standard scaling, namely ${\sim} L$, in the thermodynamic limit. Since CFI is bounded by the QFI, one can conclude that the observed near-critical enhancement is a finite-size effect.  Nevertheless, the averaged CFI faithfully reproduces the dependence of the QFI on both $h_z$ and $L$.

\begin{figure}[!b]
\centering
\includegraphics[width=0.9\linewidth]{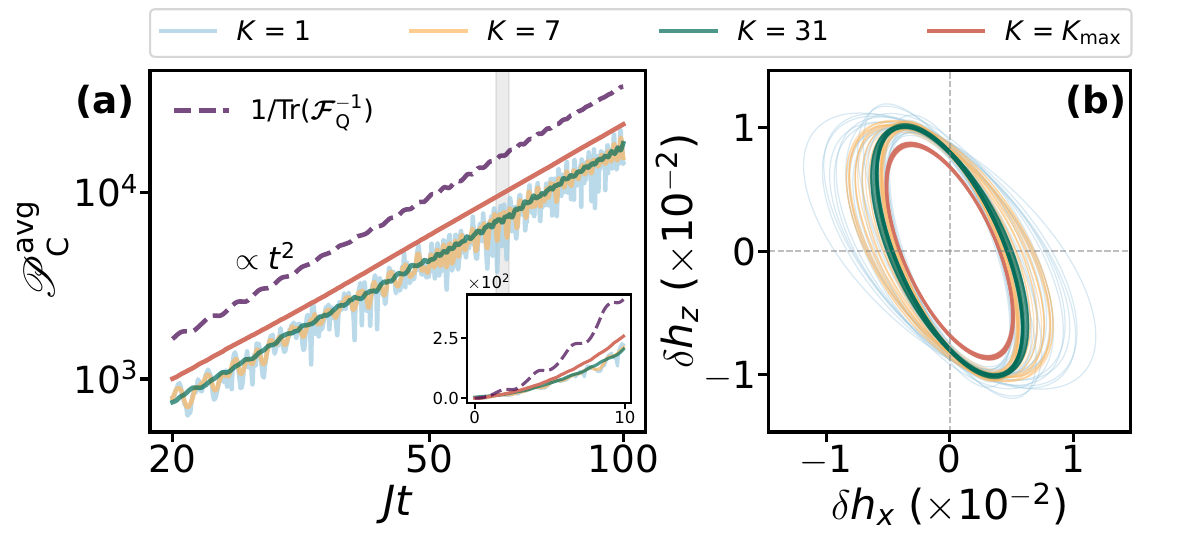}
\caption{Two-field Ising chain. (a) The quantity $\mathscr{P}_{\rm C}^{\rm avg}{=}1/\Trace[\avgcfim(\boldsymbol{h})^{-1}]$ versus time for $K{=}1,7,31,K_{\rm max}$, with $1/\Trace[\qfim^{-1}]$ (dashed) for reference, at $h_x{=}0.5$, $h_z{=}1.5$, $L{=}10$, and $\tau{=}0.1$. (b) Uncertainty ellipses $\delta\boldsymbol{h}^{\mathrm{T}}\avgcfim\,\delta\boldsymbol{h}{=}1$ in the $(\delta h_x,\delta h_z)$ plane, with one ellipse per time $t$ in the time window $Jt{\in}[63.5,66.5]$, highlighted in panel (a).
As $K$ increases, the ellipses corresponding to different times converge to each other and the area within the ellipse shrinks, indicating better sensitivity.}\label{fig:variance}
\end{figure}

\textit{Example 3: Two-field Ising chain.---} Here we consider both fields, namely $\boldsymbol{h}{=}(h_x,h_z)$, in the Hamiltonian (\ref{eq:tfim}) are unknown.  Thus, we deal with a two-parameter estimation problem. The initial state and the measurement basis remain the same. By considering the weight matrix $\boldsymbol{\mathcal{W}}{=}\mathbb{I}$, where $\mathbb{I}$ is the identity matrix, one gets 
$\variance{\hat h_x}{+}\variance{\hat h_z}{\geq}\mathcal{M}^{-1}\Trace[\avgcfim(\boldsymbol{h})^{-1}]$. The precision bound $\mathscr{P}_{\rm C}^{\rm avg}{=}1/\Trace[\avgcfim(\boldsymbol{h})^{-1}]$ is plotted in Fig.~\ref{fig:variance}(a) for $L{=}10$ and $\tau{=}0.1$. Increasing $K$ makes the curve smoother and brings it closer to the quadratic scaling indicated by $1/\Trace[\qfim^{-1}]$, showing that the time-array averaging also applies in the two-parameter setting.

For the two-parameter estimation of $h_x$ and $h_z$, we visualize $\avgcfim$ by the ellipse $\delta\boldsymbol{h}^{\mathsf T}\avgcfim\,\delta\boldsymbol{h}{=}1$~\cite{Johnson1998}. In the asymptotic limit $\cov{\hat{\boldsymbol h}}{\simeq}(\mathcal{M}\avgcfim)^{-1}$, this is the covariance contour of $\sqrt{\mathcal{M}}(\hat{\boldsymbol h}{-}\boldsymbol h)$, so a smaller area inside the ellipse indicates a higher joint precision, while the orientation identifies the best- and worst-determined parameter combinations (see Sec.~\ref{sec:SM_visualize_FIM} of the SM~\cite{SM}).
In Fig.~\ref{fig:variance}(b), each color corresponds to one $K$ and each curve of that color to a different interrogation time. For $K{=}1$ both the area and the orientation fluctuate from time to time, whereas for $K{=}K_{\rm max}$ the ellipses collapse onto a single, smaller one. The protocol thus mitigates the need to select a favorable interrogation time.

\begin{figure}[!b]
\centering
\includegraphics[width=1\linewidth]{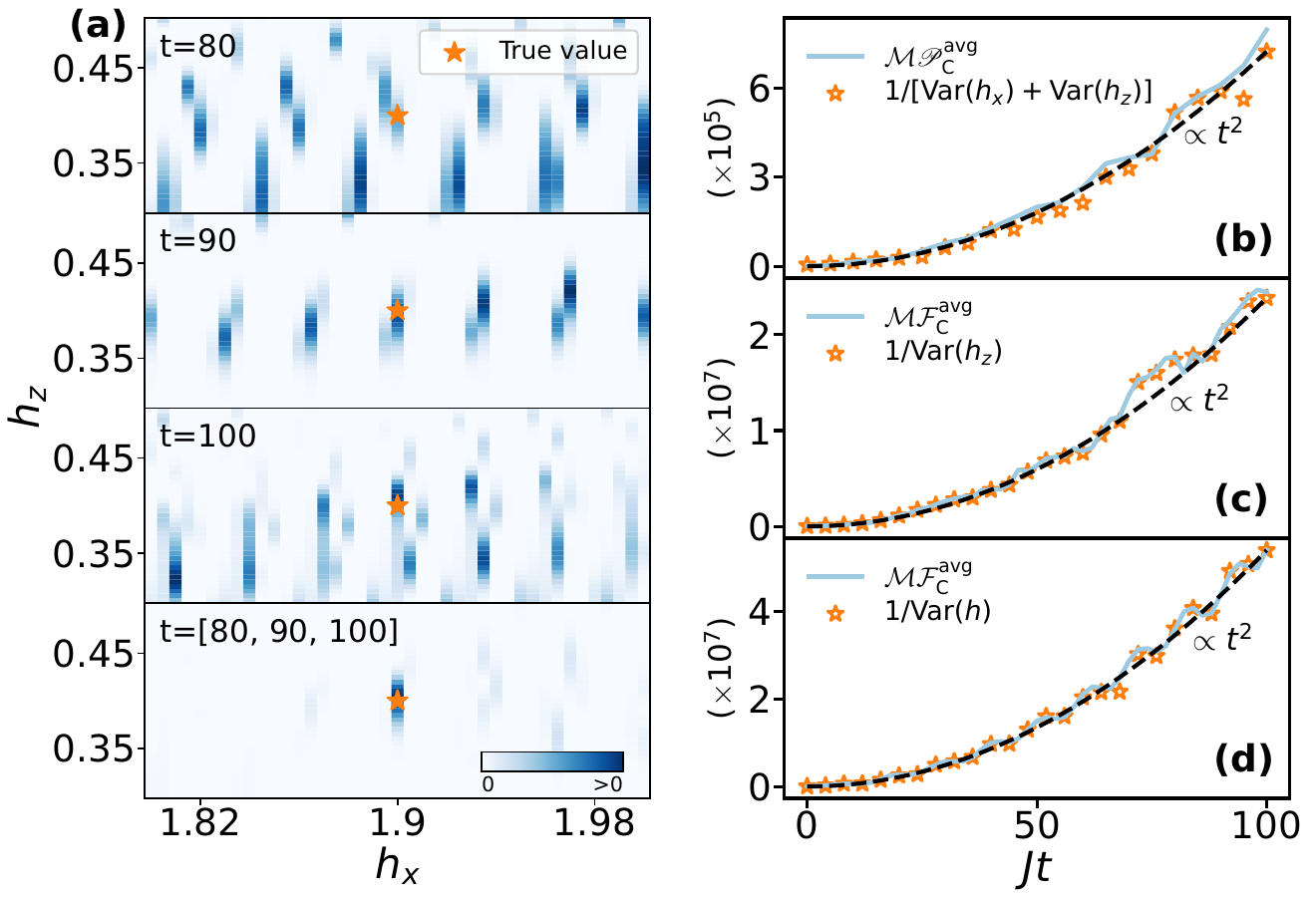}
\caption{Bayesian estimation for the time-array protocol. (a) Two-field Ising posteriors at $t{=}80,90,100$ and combined, for $h_x{=}1.9$, $h_z{=}0.4$, $\mathcal{M}{=}300$, and $L{=}6$. (b) $1/[\variance{h_x}+\variance{h_z}]$ against $\mathcal{M}\mathscr{P}_{\rm C}^{\rm avg}$ for the two-field Ising probe ($h_x{=}0.8$, $h_z{=}0.4$, $K{=}3$, $\tau{=}5$, $L{=}6$, $\mathcal{M}{=}3000$, and $200$ trials). Inverse Bayesian variance against $\mathcal{M}\avgcfi$ for (c) the transverse-field Ising chain ($L{=}6$, $h_z{=}1.0$) and (d) the single-excitation Stark probe ($L{=}21$, $h{=}0.6$), with $K{=}5$, $\tau{=}2$, $100$ samples per time, $\mathcal{M}{=}500$, and $200$ trials.\label{fig:ltfim}}
\end{figure}

\textit{Bayesian estimation.---} While the averaged CFI provides a bound on attainable precision, one has to employ a specific estimator, e.g. Bayesian inference,  to truly examine whether our protocol with grouping the measurements retains its $t^2$ scaling. Let $\mathcal D_\nu$ denote the data collected at interrogation time $t_\nu$. Starting from a prior $p_0(\boldsymbol{\theta})$, the Bayesian update is
\begin{equation}\label{eq:bayes_update}
p_\nu(\boldsymbol{\theta})
\propto p(\mathcal D_\nu|\boldsymbol{\theta},t_\nu)\,p_{\nu-1}(\boldsymbol{\theta}).
\end{equation}
Thus the posterior from one interrogation time becomes the prior for the next. The final posterior gives the estimator variance used below, see Sec.~\ref{sec:SM_bayesian} of the SM for details~\cite{SM}.

We first turn to the two-field Ising probe. As shown in Fig.~\ref{fig:ltfim}(a), for $L{=}6$, $\mathcal{M}=300$, $h_x{=}1.9$ and $h_z{=}0.4$, the posterior at a single interrogation time is multi-peaked, whereas combining the three times $t{=}80,90,100$ concentrates it around the true value. In Fig.~\ref{fig:ltfim}(b), for a two-field Ising probe of size $L{=}6$ with $(h_x,h_z){=}(0.8,0.4)$, $K{=}3$, and $\tau{=}5$, drawing $1000$ samples per interrogation time ($\mathcal{M}{=}3000$ in total) over $200$ independent trials, we plot $1/[\variance{h_x}+\variance{h_z}]$ and $\mathcal{M}\mathscr{P}_{\rm C}^{\rm avg}$ as a function of time. As the figure shows, $t^2$ scaling is achieved for both the averaged CFI matrix as well as the Bayesian estimation. In  Figs.~\ref{fig:ltfim}(c) and (d), we consider the single-parameter transverse-field Ising ($L{=}6$, $h_z{=}1.0$) and single-excitation Stark ($L{=}21$, $h{=}0.6$) probes with $K{=}5$, $\tau{=}2$, and $\mathcal{M}{=}500$ over $200$ independent trials. As the figures show, the inverse Bayesian variance as well as the rescaled averaged CFI show $t^2$ growth.

\textit{Conclusions.---} In this Letter, we show that in a multi-parameter non-equilibrium sensing setup quantum-enhanced sensitivity is manifested in quadratic scaling of the eigenvalues of the QFI matrix. In such a framework,  sub-optimal measurements, whose corresponding precision bound is governed by the CFI matrix, result in quadratic scaling modulated by a time-dependent oscillatory bounded function. We developed a protocol, based on dividing the measurements across various groups each performed at a different time, to overcome the oscillations and stabilize robust quadratic scaling of the precision for any informative but sub-optimal measurement. We show the effectiveness of the protocol through three examples as well as Bayesian estimation. Our protocol is general  and provides a key step forward for achieving quantum-enhanced sensitivity in existing non-equilibrium probes.

\textit{Acknowledgments.---} Authors acknowledge support from the National Natural Science Foundation of China (grants No.~W2541020, No.~12274059, No.~12574528, and No.~1251101297).

\bibliographystyle{apsrev4-2}

\bibliography{references}

\clearpage
\onecolumngrid
\AddToShipoutPictureFG*{
  \AtPageLowerLeft{
    \raisebox{0.62in}{
      \hspace*{0.67in}
      \begin{minipage}{0.8\textwidth}
        \footnotesize
        \rule{8em}{0.4pt}\par\vspace{0.45em}
        \makebox[1.2em][l]{\textsuperscript{*}}wsh.zheng@std.uestc.edu.cn\par
        \makebox[1.2em][l]{\textsuperscript{\ensuremath{\dagger}}}yyaoling@std.uestc.edu.cn\par
        \makebox[1.2em][l]{\textsuperscript{\ensuremath{\ddagger}}}abolfazl.bayat@uestc.edu.cn
      \end{minipage}
    }
  }
}
\begin{center}
{\large\bf Supplemental Material: Stabilizing temporal quantum-enhanced sensitivity via sub-optimal measurements\par}
\vspace{0.8em}
Wangsheng Zheng\textsuperscript{1,*},
Yaoling Yang\textsuperscript{1,\ensuremath{\dagger}},
and Abolfazl Bayat\textsuperscript{1,2,\ensuremath{\ddagger}}
\vspace{0.5em}

\textsuperscript{1}Institute of Fundamental and Frontier Sciences, University of Electronic Science and Technology of China, Chengdu 611731, China

\textsuperscript{2}Key Laboratory of Quantum Physics and Photonic Quantum Information, Ministry of Education, University of Electronic Science and Technology of China, Chengdu 611731, China
\end{center}
\vspace{1em}
\renewcommand{\theequation}{S\arabic{equation}}
\renewcommand{\thefigure}{S\arabic{figure}}
\renewcommand{\theHequation}{S\arabic{equation}}
\renewcommand{\theHfigure}{S\arabic{figure}}
\setcounter{equation}{0}
\setcounter{figure}{0}
\newcounter{smsection}
\renewcommand{\thesmsection}{\Roman{smsection}}
\providecommand{\theHsmsection}{}
\renewcommand{\theHsmsection}{SM.\Roman{smsection}}
\newcommand{\smsection}[2]{
\refstepcounter{smsection}
\label{#2}
\section*{\thesmsection.\ #1}
}

\theoremstyle{prlrunin}
\newtheorem*{definition}{Definition}

\smsection{Proof of Theorem 1}{sec:SM_thm1}

This section proves Theorem~\ref{thm:scaling} of the main text. We start by deriving the large-time form of the quantum Fisher information (QFI) matrix for a finite-dimensional, time-independent Hamiltonian. The derivation first analyzes the parameter derivative of the unitary evolution operator and isolates the part that grows linearly in time, which determines the leading term of the state derivative. We then characterize the condition under which the leading coefficient of the QFI matrix is positive definite, translating it into the Wigner-Yanase skew information criterion presented in the theorem.

The probe evolves as
\begin{equation}
\rho_{\boldsymbol{\theta}}(t)
=U_{\boldsymbol{\theta}}(t)\rho_0U_{\boldsymbol{\theta}}^\dagger(t),
\qquad
U_{\boldsymbol{\theta}}(t)=\mathrm{e}^{-\ii H(\boldsymbol{\theta})t},
\end{equation}
where $\boldsymbol{\theta}{=}(\theta_1,\theta_2,\ldots,\theta_m)$ denotes the unknown parameters to be estimated. The initial state $\rho_0$ is independent of $\boldsymbol{\theta}$, and all quantities are evaluated at the true parameter point. Let
\begin{equation}
H=\sum_\alpha E_\alpha P_\alpha,
\qquad
A_i:=\partial_{\theta_i}H ,
\end{equation}
where $P_\alpha$ is the spectral projector onto the eigenspace with energy $E_\alpha$. For each parameter, we compare $\partial_{\theta_i}U_{\boldsymbol{\theta}}(t)$ with $U_{\boldsymbol{\theta}}(t)$ itself and define the parameter generator
\begin{equation}
\mathcal K_i(t):=
\ii U_{\boldsymbol{\theta}}^\dagger(t)\partial_{\theta_i}U_{\boldsymbol{\theta}}(t)
=\int_0^t \mathrm{e}^{\ii vH}A_i\mathrm{e}^{-\ii vH}\,\dd v .
\end{equation}
This equality follows from the Wilcox parameter-derivative formula for exponential operators~\cite{Wilcox1967,Liu2015}.
Inserting the spectral decomposition and integrating each block yields
\begin{equation}\label{eq:SM_K_decomposition}
\mathcal K_i(t)=tG_i+R_i(t),
\qquad
G_i:=\sum_\alpha P_\alpha A_iP_\alpha=\mathcal{D}_H(A_i) ,
\end{equation}
where the off-diagonal blocks
\begin{equation}
R_i(t)=
\sum_{\alpha\ne\beta}
\frac{\mathrm{e}^{\ii (E_\alpha-E_\beta)t}-1}{\ii (E_\alpha-E_\beta)}
P_\alpha A_iP_\beta
\end{equation}
oscillate without growing. Since the Hilbert space is finite dimensional, the finitely many gaps $E_\alpha{-}E_\beta$ are bounded away from zero, ensuring $R_i(t){=}O(1)$ uniformly in $t$. Hence, only the block-diagonal parts $G_i$ accumulate linearly in time, and
\begin{equation}\label{eq:SM_state_derivative}
\partial_{\theta_i}\rho_{\boldsymbol{\theta}}(t)=-\ii U_{\boldsymbol{\theta}}(t)[\mathcal K_i(t),\rho_0]U_{\boldsymbol{\theta}}^\dagger(t)
=t\,X_i(t)+O(1).
\end{equation}
The operator $X_i(t)$ in Eq.~\eqref{eq:SM_state_derivative} is defined as
\begin{equation}
X_i(t):=-\ii U_{\boldsymbol{\theta}}(t)[G_i,\rho_0]U_{\boldsymbol{\theta}}^\dagger(t)=-\ii [\mathcal{D}_H(A_i),\rho_{\boldsymbol{\theta}}(t)].
\end{equation}
Since both $G_i$ and $\rho_{\boldsymbol{\theta}}(t)$ are Hermitian, $X_i(t)$ is Hermitian, and it is traceless because the trace of a commutator vanishes. These operators govern the entire large-time analysis and determine the asymptotic QFI matrix.

We now evaluate the symmetric logarithmic derivative (SLD) QFI formula in the eigenbasis $U_{\boldsymbol{\theta}}(t)\ket{r}$ of $\rho_{\boldsymbol{\theta}}(t)$, where $\rho_0{=}\sum_r\lambda_r\ket{r}\bra{r}$.
The elements of the QFI matrix are given by~\cite{liu2020quantum}
\begin{equation}\label{eq:SM_qfim_generator}
[\qfim(\boldsymbol{\theta},t)]_{ij}
=2\sum_{r,s:\lambda_r+\lambda_s>0}
\frac{
\Re\!\left[
\bra{r}U_{\boldsymbol{\theta}}^\dagger(t)\partial_{\theta_i}\rho_{\boldsymbol{\theta}}(t)U_{\boldsymbol{\theta}}(t)\ket{s}
\bra{s}U_{\boldsymbol{\theta}}^\dagger(t)\partial_{\theta_j}\rho_{\boldsymbol{\theta}}(t)U_{\boldsymbol{\theta}}(t)\ket{r}
\right]
}{\lambda_r+\lambda_s}.
\end{equation}
The eigenvalues ${\lambda_r}$ do not depend on $t$, so the denominators remain constant. Substituting Eq.~\eqref{eq:SM_state_derivative} into Eq.~\eqref{eq:SM_qfim_generator} gives
\begin{equation}\label{eq:SM_qfim_asymptotic}
\qfim(\boldsymbol{\theta},t)=t^2\vb{\Gamma}_{\rm Q}+O(t),
\end{equation}
where
\begin{equation}\label{eq:SM_GammaQ}
[\vb{\Gamma}_{\rm Q}]_{ij}
=2\sum_{r,s:\lambda_r+\lambda_s>0}
\frac{
\Re\!\left[
\bra{r}X_i(0)\ket{s}\bra{s}X_j(0)\ket{r}
\right]
}{\lambda_r+\lambda_s}.
\end{equation}
Since $\qfim/t^2{\to}\vb{\Gamma}_{\rm Q}$ as $t {\to} \infty$, the eigenvalues of $\qfim/t^2$ converge to the eigenvalues of $\vb{\Gamma}_{\rm Q}$. Hence, every QFI eigenvalue has a positive $t^2$ coefficient if and only if $\vb{\Gamma}_{\rm Q}{\succ}0$.

For any real unit vector $\boldsymbol{u}{=}(u_1,\ldots,u_m)\in\mathbb R^m$, with $m$ the number of parameters, we have
\begin{equation}\label{eq:SM_directional_gamma}
\boldsymbol u^{\mathsf T}\vb{\Gamma}_{\rm Q}\boldsymbol u
=2\sum_{r,s:\lambda_r+\lambda_s>0}
\frac{|\bra{r}X_{\boldsymbol u}(0)\ket{s}|^2}{\lambda_r+\lambda_s},
\end{equation}
where
\begin{equation}
X_{\boldsymbol u}(t):=\sum_i u_iX_i(t)=\sum_i -\ii[\mathcal{D}_H(u_iA_i),\rho_{\boldsymbol{\theta}}(t)].
\end{equation}
Eq.~\eqref{eq:SM_directional_gamma} vanishes if and only if $X_{\boldsymbol u}(0){=}0$. Therefore, the presence of the quadratic term in Eq.~\eqref{eq:SM_qfim_asymptotic} necessitates that the commutator between the pinched directional derivative and the initial state is strictly non-zero.
This degree of noncommutativity is exactly quantified by the Wigner-Yanase skew information.
Recalling that $A_{\boldsymbol{u}}{:=}\partial_{\boldsymbol u}H{=}\sum_i u_iA_i$, we obtain the criterion
\begin{equation}\label{eq:SM_positive_condition}
\vb{\Gamma}_{\rm Q}\succ0\ \Longleftrightarrow\ -\ii[\mathcal{D}_H(A_{\boldsymbol{u}}),\rho_0]\neq 0
\ \Longleftrightarrow\ 
\mathcal{S}\big[\rho_0,\mathcal{D}_H(A_{\boldsymbol{u}})\big]>0, \quad \forall \boldsymbol u\in\mathbb R^m,\ \|\boldsymbol{u}\|=1.
\end{equation}
This completes the proof of Theorem~\ref{thm:scaling}.\hfill$\square$

\smsection{Proof of Theorem 2}{sec:SM_thm2}

This section proves Theorem~\ref{thm:two} of the main text. We first introduce a sufficient informativeness condition under which the quadratic envelope can be obtained.
Consider a positive operator-valued measure (POVM) $\{\Pi_x\}_{x=1}^{N}$ with outcome probabilities $p_x(t){=}\Trace[\Pi_x\rho_{\boldsymbol{\theta}}(t)]$. The informativeness is defined as below.

\begin{definition}
A finite-outcome POVM $\{\Pi_x\}_{x=1}^{N}$ is \emph{informative} about the parameters if there exists a time $t_\star{\geq} 0$ at which, for every real unit direction $\boldsymbol u$, at least one outcome satisfies
\begin{equation}\label{eq:SM_informative_povm}
\Trace[\Pi_xX_{\boldsymbol u}(t_\star)]\ne0 .
\end{equation}
\end{definition}
For the vector $\boldsymbol v_x$ with $[\boldsymbol v_x]_i {:=} \Trace[\Pi_xX_i(t_\star)]$ ($i{=}1,\ldots,m$), linearity gives $\Trace[\Pi_xX_{\boldsymbol u}(t_\star)]{=}\sum_i u_i\Trace[\Pi_xX_i(t_\star)]{=}\boldsymbol u^{\mathsf T}\boldsymbol v_x$.
Informativeness therefore rules out any nonzero vector orthogonal to all $\boldsymbol v_x$, which is equivalent to the set of vectors $\{\boldsymbol v_x\}_{x=1}^N$ fully spanning $\mathbb R^m$. Since $\sum_x\Pi_x{=}\mathbb I$ and each $X_i(t_\star)$ is traceless, these vectors strictly sum to zero, restricting their span to a subspace of dimension at most $N{-}1$. 
Hence, an informative POVM must have $N{\geq}m{+}1$ outcomes, aligning with the invertibility condition for the CFI matrix given in Ref.~\cite{Candeloro2024}.

Since the system is finite-dimensional, the unitary evolution has the recurrence property: for every $\epsilon{>}0$ and every $T{>}0$, there is a time $t{>}T$ at which all relative phases are within $\epsilon$ of their initial values,
\begin{equation}
|\ee^{-\ii (E_\alpha-E_\beta)t}-1|<\epsilon
\quad\text{for all }\alpha,\beta ,
\end{equation}
where $E_\alpha$ and $E_\beta$ are eigenvalues of the parameter-dependent Hamiltonian $H(\boldsymbol\theta)$.
In particular, $U_{\boldsymbol{\theta}}(t)$ is within $\epsilon$, in operator norm, of a global phase times the identity~\cite{Wallace2015}.
Setting $\epsilon{=}1/n$ and $T{=}n$, we obtain a sequence $\{t_n^\prime\}_{n=1}^\infty$ along which every relative phase $\ee^{-\ii (E_\alpha-E_\beta)t_n^\prime}$ tends to $1$. Consequently, as $n{\to}\infty$,
\begin{equation}
U_{\boldsymbol{\theta}}(t_n^\prime)\rho_{\boldsymbol{\theta}}(t_\star)U_{\boldsymbol{\theta}}^\dagger(t_n^\prime)\to\rho_{\boldsymbol{\theta}}(t_\star),
\qquad
U_{\boldsymbol{\theta}}(t_n^\prime)X_i(t_\star)U_{\boldsymbol{\theta}}^\dagger(t_n^\prime)\to X_i(t_\star) ,
\end{equation}
so $p_x(t_n){\to} p_x(t_\star)$ with $t_n{=}t_n^\prime{+}t_\star$, and Eq.~\eqref{eq:SM_state_derivative} gives
\begin{equation}\label{eq:SM_dpx_convergence}
\frac{1}{t_n}\partial_{\theta_i}p_x(t_n)
=\Trace[\Pi_xU_{\boldsymbol{\theta}}(t_n^\prime)X_i(t_\star)U_{\boldsymbol{\theta}}^\dagger(t_n^\prime)]+o(1)
\to \Trace[\Pi_xX_i(t_\star)] .
\end{equation}

The CFI matrix $[\cfim]_{ij}{=}\sum_xp_x^{-1}(\partial_{\theta_i}p_x)(\partial_{\theta_j}p_x)$ is a sum of positive-semidefinite contributions, one for each outcome with $p_x{>}0$.
For every
$x$ such that $p_x(t_\star){>}0$, the convergence $p_x(t_n){\to}p_x(t_\star)$ implies that
$p_x(t_n){>}0$ for all sufficiently large $n$. Since $p_x(t_n){\leq}1$ and every CFI contribution is positive semidefinite, for all such $n$,
\begin{equation}
\frac{\cfim(\boldsymbol\theta,t_n)}{t_n^2}
=\sum_{x:\,p_x(t_n)>0}\frac{1}{p_x(t_n)}\frac{[\grad_{\boldsymbol{\theta}}p_x(t_n)][\grad_{\boldsymbol{\theta}}p_x(t_n)]^{\mathsf T}}{t_n^2}\succeq\sum_{x:\,p_x(t_\star)>0}\frac{[\grad_{\boldsymbol{\theta}}p_x(t_n)][\grad_{\boldsymbol{\theta}}p_x(t_n)]^{\mathsf T}}{t_n^2}.
\end{equation}
We correspondingly define a matrix $\vb{\Lambda}$ with elements
\begin{equation}
\vb{\Lambda}_{ij}:=\sum_{x:\,p_x(t_\star)>0}
\Trace[\Pi_xX_i(t_\star)]\Trace[\Pi_xX_j(t_\star)].
\end{equation}
The matrix $\vb{\Lambda}$ is in fact positive definite. 
Since both $\Pi_x$ and $\rho_{\boldsymbol{\theta}}(t_\star)$ are positive semi-definite, $\Pi_x\rho_{\boldsymbol{\theta}}(t_\star){=}\rho_{\boldsymbol{\theta}}(t_\star)\Pi_x{=}0$ whenever $p_x({t_\star}){=}0$, resulting in $\Trace[\Pi_xX_{\boldsymbol u}(t_\star)]{=}0$.
Informativeness therefore ensures that at least one outcome satisfies 
$\Trace[\Pi_xX_{\boldsymbol u}(t_\star)]{\neq}0$ with $p_x({t_\star}){>}0$. Hence, for any real unit vector $\boldsymbol{u}$,
\begin{equation}
\boldsymbol u^{\mathsf T}\vb{\Lambda}\boldsymbol u
=\sum_{x:\,p_x(t_\star)>0}
\left|\Trace[\Pi_xX_{\boldsymbol u}(t_\star)]\right|^2>0,
\end{equation}
and consequently $\vb{\Lambda}{\succ}0$.

Following from Eq.~\eqref{eq:SM_dpx_convergence} and the positive-definiteness of $\vb{\Lambda}$, we have
\begin{equation}
\sum_{x:\,p_x(t_\star)>0}\frac{[\grad_{\boldsymbol{\theta}}p_x(t_n)][\grad_{\boldsymbol{\theta}}p_x(t_n)]^{\mathsf T}}{t_n^2}\succeq \vb{\Lambda}+o(1)\mathbb{I}\succeq \frac{1}{2}\vb{\Lambda} .
\end{equation}
Therefore, for all sufficiently large $n$,
\begin{equation}
\cfim(\boldsymbol{\theta},t_n)\succeq \frac{t_n^2}{2}\vb{\Lambda} .
\end{equation}
Since both sides are positive definite, inversion reverses the matrix inequality:
$\cfim(\boldsymbol\theta,t_n)^{-1}{\preceq}2\vb{\Lambda}^{-1}/t_n^2$.
Multiplying by $\boldsymbol{\mathcal W}{\succeq}0$, taking the trace, and then taking reciprocals gives
\begin{equation}
\frac{1}{\Trace[\boldsymbol{\mathcal W}\cfim(\boldsymbol\theta,t_n)^{-1}]}
\ge
\frac{t_n^2}{2\Trace[\boldsymbol{\mathcal W}\vb{\Lambda}^{-1}]}.
\end{equation}
The denominator is finite and strictly positive because
$\vb{\Lambda}^{-1}{\succ}0$ and $\boldsymbol{\mathcal W}{\succeq}0$ is nonzero. Consequently, the limit superior in Eq.~\eqref{eq:fixed-povm-envelope} is at least
$1/(2\Trace[\boldsymbol{\mathcal W}\vb{\Lambda}^{-1}]){>}0$, which completes the proof of Theorem~\ref{thm:two}.\hfill$\square$

\smsection{QUADRATIC UPPER BOUNDS AND GENERICITY OF INFORMATIVE POVMS}{sec:SM_discussion_thm}

In this section, we derive two further results related to Theorems~\ref{thm:scaling} and~\ref{thm:two}. First, we obtain a quadratic upper bound on the quantum and classical precision measures, thereby justifying the bounded coefficient $0{\leq}f(\boldsymbol\theta,t){\leq}B$ used in the main text. Second, under the quadratic-scaling condition of Theorem~\ref{thm:scaling}, we show that, at any fixed time and true parameter point, informative finite-outcome POVMs with $N{\geq}m{+}1$ form an open, dense, full-measure subset of the $N$-outcome POVM space.

\subsection{A. QUADRATIC UPPER BOUND}

The quadratic upper bound follows by bounding the QFI in each parameter direction. Recall that the notation
$\|G\|{:=}\lambda_{\max}(G)-\lambda_{\min}(G)$ denotes the spectral range of a Hermitian operator $G$. For a real unit vector $\boldsymbol u$, the local generator is
\begin{equation}
\mathcal K_{\boldsymbol u}(t)
:=\ii U_{\boldsymbol{\theta}}^\dagger(t)
\partial_{\boldsymbol u}U_{\boldsymbol{\theta}}(t)
=\sum_i u_i\mathcal K_i(t)
\end{equation}
With the spectral decomposition $\rho_0{=}\sum_r\lambda_r\ket{r}\bra{r}$, the QFI in this direction is bounded by the variance of the local generator~\cite{Pang2014,liu2020quantum}:
\begin{align}
\boldsymbol u^{\mathsf T}\qfim(\boldsymbol{\theta},t)\boldsymbol u
&\leq \sum_{r} 4\lambda_r \operatorname{Var}_{\ket{r}}(\mathcal{K}_{\boldsymbol{u}}(t)) \nonumber\\
&\leq4\operatorname{Var}_{\rho_0}
\!(\mathcal{K}_{\boldsymbol{u}}(t)) \nonumber\\
&\le
\|\mathcal K_{\boldsymbol u}(t)\|^2 .
\end{align}
The second inequality follows from Theorem~1 in Ref.~\cite{Toth2013}. The third uses
$\operatorname{Var}_{\rho}(G){\leq}\|G\|^2/4$. The integral representation of the generator then bounds its spectral range:
\begin{align}
\|\mathcal K_{\boldsymbol u}(t)\|
&=\left\|
\int_0^t\mathrm{e}^{\ii vH}A_{\boldsymbol u}
\mathrm{e}^{-\ii vH}\,\dd v
\right\|\nonumber\\
&\le t\|A_{\boldsymbol u}\|
\le t\sum_i|u_i|\|A_i\|
\le t\left(\sum_i\|A_i\|^2\right)^{1/2},
\end{align}
The final step uses the Cauchy--Schwarz inequality and
$\|\boldsymbol u\|_2{=}1$. The resulting bound is uniform over all real unit directions:
$\boldsymbol u^{\mathsf T}(c_{\rm Q}t^2\mathbb I-\qfim(\boldsymbol{\theta},t))\boldsymbol u
{\geq}0$, where $c_{\rm Q}{:=}\sum_i\|A_i\|^2$. It therefore gives the matrix inequality
\begin{equation}
\qfim(\boldsymbol{\theta},t)
\preceq c_{\rm Q}t^2\,\mathbb I.
\end{equation}

This matrix bound also controls the inverse-trace precision. When
$\cfim(\boldsymbol{\theta},t)$ is nonsingular, the Braunstein--Caves inequality
$\cfim{\preceq}\qfim$~\cite{Braunstein1994} ensures that $\qfim$ is nonsingular as well.
Inversion reverses the order of positive-definite matrices, so the two bounds combine to give, for $t{>}0$,
\begin{equation}
\cfim(\boldsymbol{\theta},t)^{-1}
\succeq\qfim(\boldsymbol{\theta},t)^{-1}
\succeq\frac{1}{c_{\rm Q}t^2}\,\mathbb I .
\end{equation}
Taking the trace with the positive-semidefinite weight
$\boldsymbol{\mathcal W}{\succeq}0$ preserves the inequalities:
\begin{equation}
\Trace[\boldsymbol{\mathcal W}\cfim(\boldsymbol{\theta},t)^{-1}]
\ge
\Trace[\boldsymbol{\mathcal W}\qfim(\boldsymbol{\theta},t)^{-1}]
\ge
\frac{1}{Bt^2}.
\end{equation}
Here $B{:=}c_{\rm Q}/\Trace[\boldsymbol{\mathcal W}]$ is independent of time. In the nonsingular case considered here, $c_{\rm Q}{>}0$. The nonzero positive-semidefinite weight also satisfies $\Trace[\boldsymbol{\mathcal W}]{>}0$. Thus $B{>}0$. Taking reciprocals yields the precision bounds
\begin{equation}\label{eq:multiparams-bound-SM}
\frac{1}{\Trace[\boldsymbol{\mathcal W}\cfim(\boldsymbol{\theta},t)^{-1}]}
\le
\frac{1}{\Trace[\boldsymbol{\mathcal W}\qfim(\boldsymbol{\theta},t)^{-1}]}
\le
Bt^2.
\end{equation}
The singular cases follow from the convention that the inverse-trace precision is zero whenever the Fisher information matrix is singular. If $\cfim$ is singular but $\qfim$ is nonsingular, the classical precision vanishes, while inversion of $\qfim(\boldsymbol\theta,t){\preceq}c_{\rm Q}t^2\mathbb I$ gives the same quantum upper bound $Bt^2$. If $\qfim$ is singular, the Braunstein--Caves inequality forces $\cfim$ to be singular as well, and both precision terms vanish. Equation~\eqref{eq:multiparams-bound-SM} therefore holds in all cases.
For single-parameter sensing, the main-text definition uses the scalar weight $\boldsymbol{\mathcal W}{=}1$, giving $B{=}\|\partial_\theta H(\theta)\|^2$. Equation~\eqref{eq:multiparams-bound-SM} then reduces to $\cfi\leq\qfi\leq t^2\|\partial_\theta H(\theta)\|^2$, in agreement with Refs.~\cite{Boixo2007,Pang2014,Puig2025}.

The precision bound thus establishes the bounded coefficient introduced after Theorem~\ref{thm:two}:
\begin{equation}
f(\boldsymbol{\theta},t)
:=\frac{1}{t^2\Trace[\boldsymbol{\mathcal W}
\cfim(\boldsymbol{\theta},t)^{-1}]},
\qquad
0\leq f(\boldsymbol{\theta},t)
\leq B.
\end{equation}
For an informative fixed measurement, Theorem~\ref{thm:two} also gives
$\limsup_{t\to\infty}f(\boldsymbol{\theta},t){>}0$.
Thus every fixed measurement obeys a quadratic upper bound on precision, and an informative fixed measurement attains a positive quadratic envelope at arbitrarily large interrogation times.

\subsection{B. GENERICITY OF INFORMATIVE POVMS}

We now show that the informativeness condition in Sec.~\ref{sec:SM_thm2} does not depend on a special choice of measurement. Under the condition of Theorem~\ref{thm:scaling}, and at a fixed true parameter point, the POVMs that are informative at some time form a relatively open, dense, full-measure subset of the $N$-outcome POVM space for every $N{\geq}m{+}1$.

To prove this, first fix an arbitrary time $t{\geq}0$, and let $\mathcal G_t$ denote the POVMs that are informative at that time. The condition of Theorem~\ref{thm:scaling} ensures that the traceless Hermitian operators $X_1(t),\ldots,X_m(t)$ are linearly independent over $\mathbb R$. Since $\Trace[\Pi_xX_i(t)]{=}0$ whenever $p_x(t){=}0$, the restricted sum defining $\vb{\Lambda}(t)$ equals the unrestricted sum over all outcomes. Hence, $\det\vb{\Lambda}(t)$ is a polynomial in the POVM elements and is strictly positive on $\mathcal G_t$.

This polynomial is not identically zero. Indeed, take the POVM elements $\mathbb I/N{+}\epsilon X_i(t)$ for $i{=}1,\ldots,m$, the element $\mathbb I/N{-}\epsilon\sum_{i=1}^{m}X_i(t)$, and any remaining elements equal to $\mathbb I/N$. They sum to the identity and are positive definite for sufficiently small $\epsilon{>}0$. For the first $m$ outcomes,
\begin{equation*}
\Trace[\Pi_xX_i(t)]
=\epsilon\,\Trace[X_x(t)X_i(t)],
\qquad x,i=1,\ldots,m .
\end{equation*}
The matrix $[\Trace(\Pi_xX_i(t))]_{x,i=1}^{m}$ is therefore $\epsilon$ times the nonsingular Hilbert--Schmidt Gram matrix of the $X_i(t)$. Hence, the constructed POVM is informative at time $t$.

The zero set of a nonzero polynomial has Lebesgue measure zero and empty interior~\cite{Mityagin2015}. The POVM set is convex and contains $\Pi_x=\mathbb I/N$ as an interior point relative to the affine space $\sum_x\Pi_x=\mathbb I$. It follows that $\mathcal G_t$ is relatively open, dense, and of full measure in the POVM set, with measure understood relative to this affine space.

Finally, the original definition requires informativeness only at some time. Thus, the full set of informative POVMs is
\begin{equation*}
\mathcal G=\bigcup_{t\geq0}\mathcal G_t .
\end{equation*}
This union is relatively open, and it is dense and of full measure because it contains any one of the sets $\mathcal G_t$. Thus, almost every POVM with sufficiently many outcomes is informative, and informativeness persists under sufficiently small perturbations within the POVM space.

\smsection{Supporting data for spin-chain examples}{sec:SM_spin_examples}

Here, we compute the QFI of the ground state for finite transverse-field Ising chains with system size $L{\in}[6,20]$ and coupling $J{=}1$.
The finite-size critical point $h_z^c$ is identified from the position of the QFI maximum, which is shifted from $h_z{=}J$, as illustrated in Fig.~\ref{fig:tfim_qfi_gs}.

\begin{figure}[h]
\centering
\includegraphics[width=0.3\linewidth]{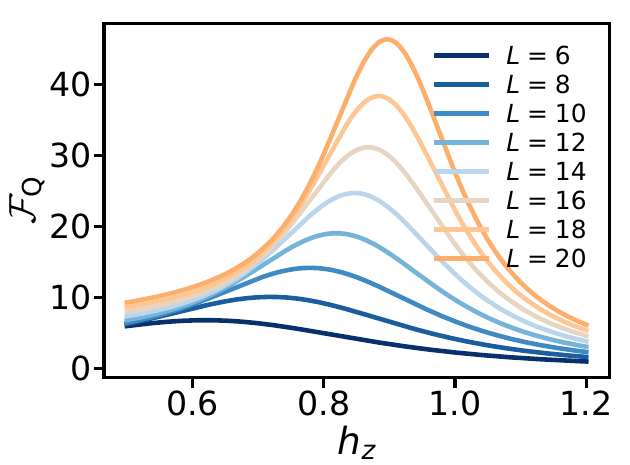}
\caption{The ground state QFI over $h_z$ for finite transverse-field Ising chains of $L{\in}[6,20]$ with $J{=}1$.\label{fig:tfim_qfi_gs}}
\end{figure}

For the non-equilibrium dynamics, we evaluate $\cfi/t^2$ and $\qfi/t^2$ at $t{=}100$, with the initial state $\ket{++\ldots+}$ and local $\sigma_x$ measurements.
Figs.~\ref{fig:tfim_supp}(a) and \ref{fig:tfim_supp}(b) present the normalized CFI and QFI as functions of $h_z/h_z^c$.
It is evident that $\cfi/t^2$ fluctuates strongly as $h_z$ varies, while applying the time-array sensing protocol enables the averaged CFI (see Fig.~\ref{fig:tfim}(c)) to restore a smooth behavior following the QFI.
The dependence of $\avgcfi/t^2$ on $L$ (see Fig.~\ref{fig:tfim}(d)) also restores the $\qfi/t^2$ scaling over $L{\in}[6, 20]$, as plotted in Fig.~\ref{fig:tfim_supp}(c).
To verify the apparent growth of the scaling exponent near criticality, we further evaluate the QFI up to $L{=}400$ by Jordan-Wigner transform~\cite{Lieb1961,Pfeuty1970,Mbeng2024} for various $h_z$.
The exponent asymptotically converges to $\beta{\simeq} 1$ as the system approaches the thermodynamic limit, revealing the absence of critical enhancement in this model.

\begin{figure}[h]
\centering
\includegraphics[width=0.9\linewidth]{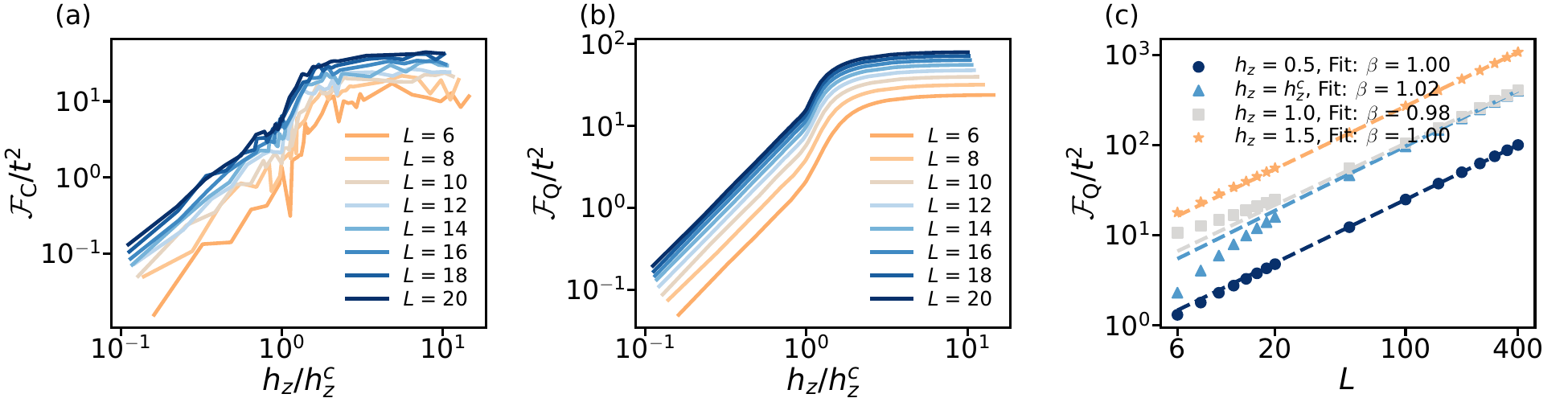}
\caption{Supplementary numerical results for the transverse-field Ising chain. (a) $\cfi/t^2$ and (b) $\qfi/t^2$ versus $h_z/h_z^c$ for $L{\in}[6,20]$, and (c) the scaling of $\qfi/t^2$ with $L$ for various $h_z$, at $t{=}100$.\label{fig:tfim_supp}}
\end{figure}

\begin{figure}[h]
\centering
\includegraphics[width=0.9\linewidth]{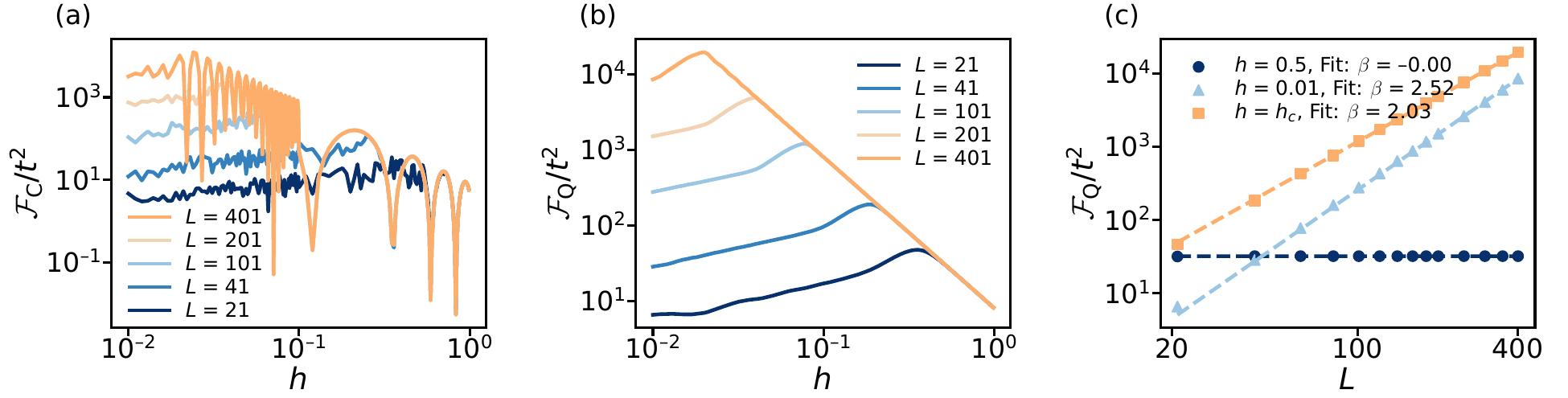}
\caption{Supplementary numerical results for the single-excitation Stark chain. (a) $\cfi/t^2$ and (b) $\qfi/t^2$ versus $h$ for $L{\in}[21,401]$, and (c) the scaling of $\qfi/t^2$ with $L$ for various $h$, at $t{=}5000$.\label{fig:stark_supp}}
\end{figure}

For the Stark chain with a single excitation at the center, we plot in Fig.~\ref{fig:stark_supp} both $\cfi/t^2$ and $\qfi/t^2$ at $t{=}5000$.
In Fig.~\ref{fig:stark_supp}(a), strong oscillations of the CFI deteriorate the estimation precision.
In contrast, $\avgcfi/t^2$ (see Figs.~\ref{fig:stark}(c)-\ref{fig:stark}(d)) reconstructs the behavior of $\qfi/t^2$ for both $h$ and $L$, as shown in Figs.~\ref{fig:stark_supp}(b)-\ref{fig:stark_supp}(c), maintaining the quantum enhancement in the extended phase.

\smsection{Two-parameter Fisher-information matrix visualization}
{sec:SM_visualize_FIM}

To visualize the Fisher information for the simultaneous estimation of $h_x$ and $h_z$, we consider a small displacement $\delta\boldsymbol{h}{=}(\delta h_x,\delta h_z)^{\mathsf T}$ from a reference parameter point $\boldsymbol{h}{=}(h_x,h_z)^{\mathsf T}$. Let $p(x|\boldsymbol{h})$ denote the probability distribution of the measurement outcome $x$ for a given $\boldsymbol{h}$. The statistical distinguishability between two neighboring distributions $p(x|\boldsymbol{h})$ and $p(x|\boldsymbol{h}+\delta\boldsymbol{h})$ can be quantified by the Kullback-Leibler (KL) divergence,
\begin{equation}
D_{\mathrm{KL}}\!\left[p(x|\boldsymbol{h})\,\Vert\,p(x|\boldsymbol{h}+\delta\boldsymbol{h})\right]
=
\sum_x p(x|\boldsymbol{h})
\ln\frac{p(x|\boldsymbol{h})}{p(x|\boldsymbol{h}+\delta\boldsymbol{h})}.
\label{eq:kl_definition}
\end{equation}

For smooth probabilities with locally fixed support, we expand the logarithm around $\boldsymbol{h}$ as
\begin{equation}
\ln p(x|\boldsymbol{h}+\delta\boldsymbol{h})=\ln p(x|\boldsymbol{h})+\sum_i\delta h_i\,\partial_i\ln p+\frac{1}{2}\sum_{i,j}\delta h_i\delta h_j\,\partial_i\partial_j\ln p+\mathcal{O}\!\left(\|\delta\boldsymbol{h}\|^3\right),
\end{equation}
where $\partial_i{\equiv}\partial/\partial h_i$. Substituting this expansion into Eq.~\eqref{eq:kl_definition} gives
\begin{equation}
D_{\mathrm{KL}}
=
-\sum_i\delta h_i\left\langle\partial_i\ln p\right\rangle
-\frac{1}{2}\sum_{i,j}\delta h_i\delta h_j\left\langle\partial_i\partial_j\ln p\right\rangle
+\mathcal{O}\!\left(\|\delta\boldsymbol{h}\|^3\right).
\end{equation}
The first-order term vanishes because
\begin{equation}
\left\langle\partial_i\ln p\right\rangle
=
\sum_x p\,\partial_i\ln p
=
\sum_x\partial_i p
=
\partial_i\sum_xp
=
0.
\end{equation}
The Fisher information matrix can be written as
\begin{equation}
\boldsymbol{\mathcal{F}}_{ij}
=
\left\langle\partial_i\ln p\,\partial_j\ln p\right\rangle
=
-\left\langle\partial_i\partial_j\ln p\right\rangle .
\label{eq:fim_definition}
\end{equation}
Therefore, to second order in $\delta\boldsymbol{h}$,
\begin{equation}
D_{\mathrm{KL}}\!\left[p(x|\boldsymbol{h})\,\Vert\,p(x|\boldsymbol{h}+\delta\boldsymbol{h})\right]
=
\frac{1}{2}\delta\boldsymbol{h}^{\mathsf T}\boldsymbol{\mathcal{F}}(\boldsymbol{h})\delta\boldsymbol{h}
+\mathcal{O}\!\left(\|\delta\boldsymbol{h}\|^3\right).
\label{eq:kl_fisher}
\end{equation}

Equation~\eqref{eq:kl_fisher} shows that the quadratic form $\delta\boldsymbol{h}^{\mathsf T}\boldsymbol{\mathcal{F}}\delta\boldsymbol{h}$ characterizes the local statistical distinguishability associated with a parameter displacement $\delta\boldsymbol{h}$.
For the time-array protocol, averaging this relation over the fixed sampling times gives the same quadratic form with the averaged CFI matrix $\avgcfim(\boldsymbol{h},t)$ and the mean KL divergence $\overline D_{\mathrm{KL}}$. In the following, we take $\boldsymbol{\mathcal F}{=}\avgcfim(\boldsymbol{h},t)$ to match Fig.~\ref{fig:variance}(b) in the main text.
We therefore visualize the Fisher information through
\begin{equation}
\delta\boldsymbol{h}^{\mathsf T}\boldsymbol{\mathcal{F}}\delta\boldsymbol{h}=c,
\label{eq:fisher_ellipse}
\end{equation}
where $c{>}0$ fixes the distinguishability level. For the two parameters considered here, this becomes
\begin{equation}
\boldsymbol{\mathcal{F}}_{xx}\delta h_x^2
+2\boldsymbol{\mathcal{F}}_{xz}\delta h_x\delta h_z
+\boldsymbol{\mathcal{F}}_{zz}\delta h_z^2
=c,
\label{eq:ellipse_Fisher}
\end{equation}
which forms an ellipse when $\boldsymbol{\mathcal{F}}$ is positive definite. Within the local quadratic approximation, points on the same ellipse have the same mean KL divergence, $\overline D_{\mathrm{KL}}{\simeq}c/2$. We set $c{=}1$ as in Fig.~\ref{fig:variance}(b). This fixes the overall scale without changing the shape or orientation.

The connection to estimation uncertainty follows from the multiparameter Cram\'er--Rao bound,
\begin{equation}
\mathrm{Cov}(\hat{\boldsymbol{h}})
\succeq
\mathcal{M}^{-1}\boldsymbol{\mathcal{F}}^{-1},
\label{eq:matrix_CRB}
\end{equation}
where $\hat{\boldsymbol{h}}$ denotes an unbiased estimator of $\boldsymbol{h}$ and $\succeq$ denotes the positive-semidefinite matrix ordering. Under standard regularity conditions, an asymptotically efficient estimator saturates this bound in the large-sample limit, such that
\begin{equation}
\boldsymbol{\Sigma}
\equiv
\mathrm{Cov}(\hat{\boldsymbol{h}})
\simeq
\mathcal{M}^{-1}\boldsymbol{\mathcal{F}}^{-1}.
\label{eq:cov_fisher}
\end{equation}

Let $\hat{\boldsymbol{\epsilon}}{=}\hat{\boldsymbol{h}}{-}\boldsymbol{h}$ denote the estimation error. A covariance (Mahalanobis) contour associated with $\boldsymbol{\Sigma}$ is defined by
\begin{equation}
\hat{\boldsymbol{\epsilon}}^{\mathsf T}
\boldsymbol{\Sigma}^{-1}
\hat{\boldsymbol{\epsilon}}
=c.
\label{eq:cov_ellipse}
\end{equation}
Using Eq.~\eqref{eq:cov_fisher}, this contour asymptotically takes the same form as the Fisher-information ellipse,
\begin{equation}
\boldsymbol{\epsilon}^{\mathsf T}
\boldsymbol{\mathcal{F}}
\boldsymbol{\epsilon}
=c,
\end{equation}
where $\boldsymbol{\epsilon}=\sqrt{\mathcal{M}}\hat{\boldsymbol{\epsilon}}$.
Thus, the same ellipse describes equal local distinguishability and, in the asymptotically efficient limit, a covariance contour of the rescaled estimation error.

The principal axes are the eigenvectors of $\boldsymbol{\mathcal F}$, with the short and long axes identifying the best- and worst-determined parameter combinations, respectively. Let $\lambda_1$ and $\lambda_2$ be the eigenvalues of $\boldsymbol{\mathcal{F}}$. The two semi-axis lengths of the Fisher-information ellipse are
\begin{equation}
a_i=\sqrt{\frac{c}{\lambda_i}},
\end{equation}
which gives
\begin{equation}
a_1^2+a_2^2
=
c\left(\frac{1}{\lambda_1}+\frac{1}{\lambda_2}\right)
=
c\,\mathrm{Tr}(\boldsymbol{\mathcal{F}}^{-1}).
\label{eq:ellipse_extent}
\end{equation}
Hence, the trace-based estimation precision,
\begin{equation}
{\mathscr P}
\equiv
\frac{1}{\mathrm{Tr}(\boldsymbol{\mathcal{F}}^{-1})},
\end{equation}
is directly related to the principal-axis extent of the ellipse as
\begin{equation}
{\mathscr P}
=
\frac{c}{a_1^2+a_2^2}.
\label{eq:overall_precision}
\end{equation}
For $\boldsymbol{\mathcal F}{=}\avgcfim$, this is precisely $\mathscr P_{\mathrm C}^{\mathrm{avg}}$ in Fig.~\ref{fig:variance}(a). A reduction in $a_1^2+a_2^2$ therefore directly corresponds to an increase in $\mathscr P$. The area $\pi a_1a_2$ alone does not determine this trace-based precision. In particular, if both principal-axis lengths decrease, the ellipse area also decreases and the trace-based estimation precision necessarily increases, irrespective of the relative rotation or intersection of the ellipses.

In Fig.~\ref{fig:variance}(b), increasing $K$ stabilizes both the size and orientation of the ellipses across the displayed time window. The orientation adds information beyond the scalar precision in Fig.~\ref{fig:variance}(a): the best- and worst-determined parameter combinations also become less sensitive to the interrogation time.

\smsection{Bayesian estimation}{sec:SM_bayesian}

The Bayesian estimator is known to saturate the Cram{\'e}r-Rao bound in the asymptotic limit \cite{Cam1986}.
Consider a dataset $\mathcal{D}$ of size $|\mathcal{D}|{=}\mathcal{M}$ obtained at a fixed interrogation time $t$.
In general, the posterior distribution $p(\boldsymbol{\theta}|\mathcal{D})$ for the unknown parameters $\boldsymbol{\theta}$ is updated according to Bayes' rule:
\begin{equation}
p(\boldsymbol{\theta}|\mathcal{D})=\frac{p(\mathcal{D}|\boldsymbol{\theta})p(\boldsymbol{\theta})}{p(\mathcal{D})}.
\end{equation}
Here, $p(\boldsymbol{\theta})$ encodes our prior knowledge about the parameters, $p(\mathcal{D}|\boldsymbol{\theta})$ defines the likelihood function of the data, and $p(\mathcal{D})$ ensures the proper normalization of the posterior distribution.
Within the time-array sensing protocol, the $\mathcal{M}$ measurements are distributed across $K$ distinct times, such that the subset of samples $\mathcal{D}_\nu$ independently collected at the $\nu$-th time step satisfies $|\mathcal D_\nu|{=}\mathcal{M}/K$ for $\nu{=}1,\ldots,K$.
The prior is initially taken to be uniform. The posterior distribution at each interrogation time serves as the prior for the subsequent step, namely, 
\begin{equation}
p(\boldsymbol{\theta}|\mathcal{D}_1,\ldots,\mathcal{D}_\nu)\propto p(\mathcal{D}_\nu|\boldsymbol{\theta},t_\nu)p(\boldsymbol{\theta}|\mathcal{D}_1,\ldots,\mathcal{D}_{\nu-1}).
\end{equation}
Let $p_\nu(\boldsymbol{\theta}){=}p(\boldsymbol{\theta}|\mathcal{D}_1,\ldots,\mathcal{D}_\nu)$. The iterative update process is then governed by Eq.~\eqref{eq:bayes_update}.
For each group of samples, one obtains a posterior distribution from which the variance of individual parameter $\theta_i$ is calculated by
\begin{equation}
\variance{\theta_i}=\int (\theta_i-\bar{\theta}_i)^2 p(\boldsymbol{\theta}|\mathcal{D})\dd \boldsymbol{\theta},
\end{equation}
where $\bar{\theta}_i{=}\int \theta_i p(\boldsymbol{\theta}|\mathcal{D})\dd \boldsymbol{\theta}$.
To assess the performance of the estimation protocol, we use the expected marginal variances averaged over independent trials.

In Fig.~\ref{fig:estimate}, we plot the posterior distributions obtained using a single interrogation time as well as the time-array sensing protocol for $t{=}80,90,100$.
The total number of measurement samples is fixed. Specifically, $\mathcal{M}$ samples are used in the single-time estimation, whereas $\mathcal{M}/3$ samples are allocated to each interrogation time in the time-array sensing protocol.
Here, we set $\mathcal{M}{=}30$ for the transverse-field Ising model ($L{=}6,h_z{=}0.8$) and the Stark model ($L{=}21,h{=0.8}$).
For each configuration, the estimation procedure is further repeated over $200$ independent trials across a broad parameter range to provide a statistical characterization.
The results indicate that the use of a single interrogation time leads to posterior distributions featuring multiple peaks, which in turn give rise to large estimation uncertainties.
When combining measurement outcomes across the time array, one can have a more reliable estimation via a single pronounced peak located near the true value.
The behavior of both the posterior distributions and the estimated values provides direct evidence for the effectiveness of the proposed sensing protocol.

\begin{figure}[t]
\centering
\begin{overpic}[width=.9\linewidth]{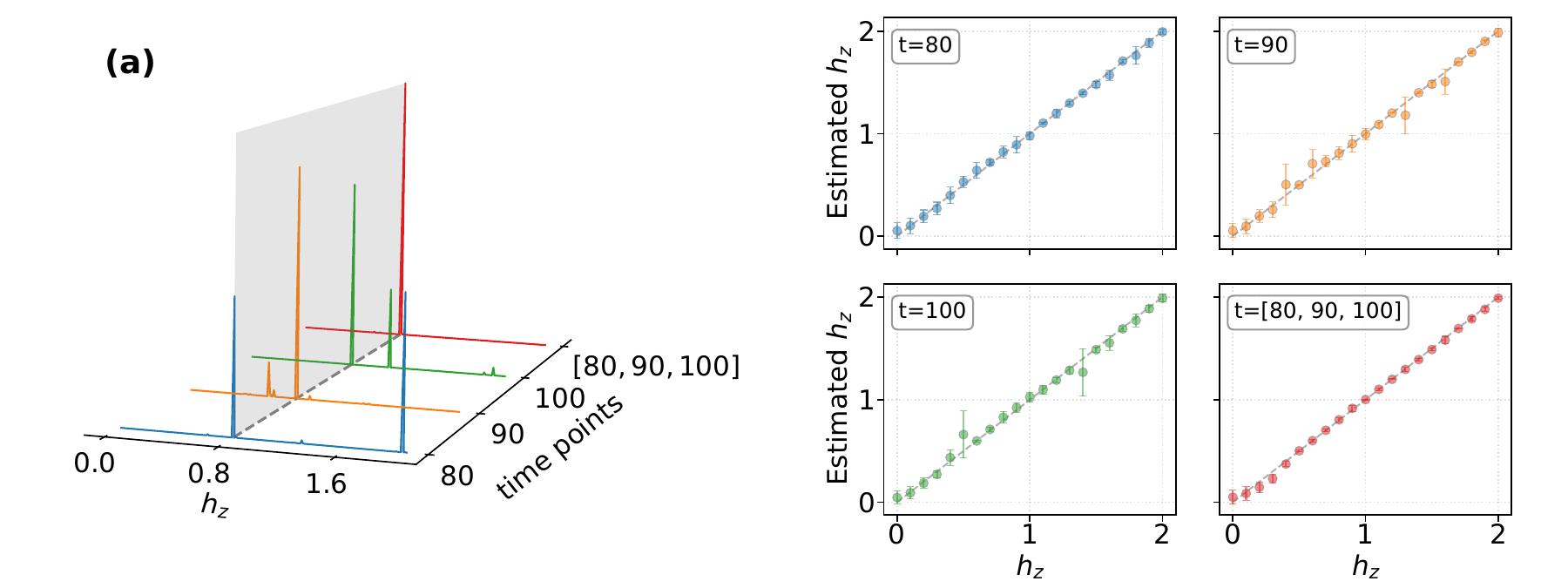}
\end{overpic}
\begin{overpic}[width=.9\linewidth]{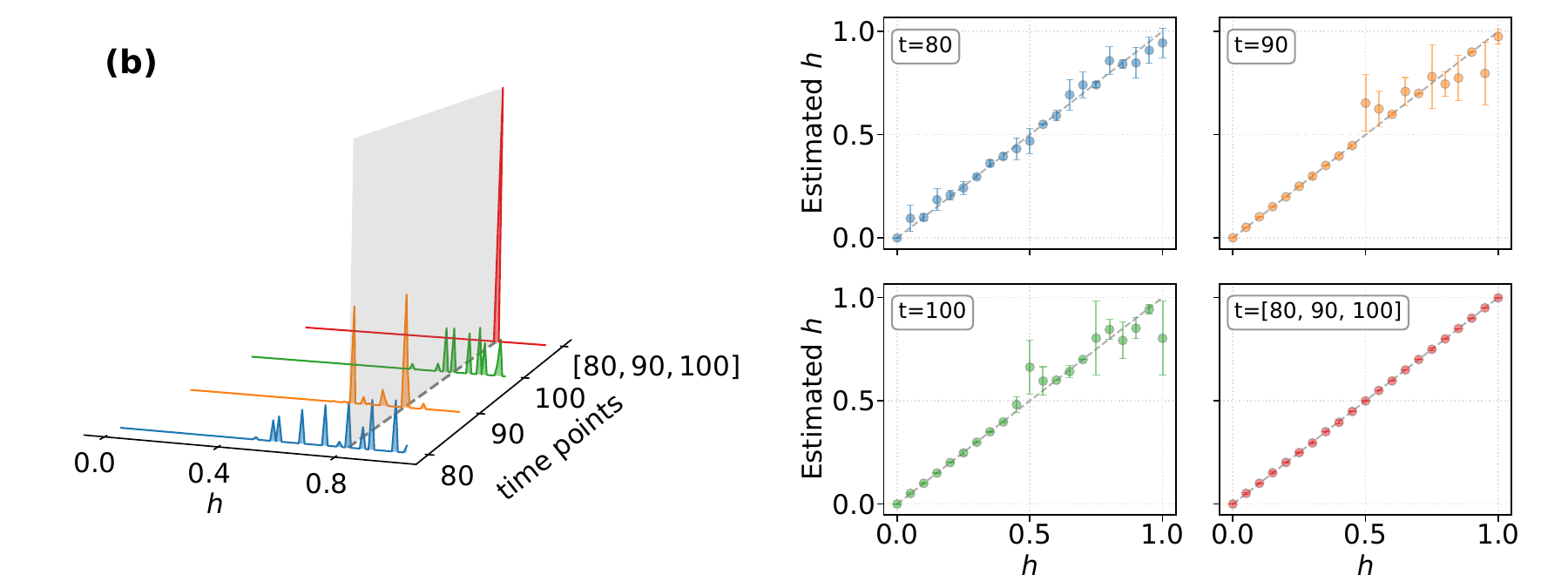}
\end{overpic}
\caption{Bayesian estimation using a single time and the time-array sensing protocol for $t{=}80,90,100$. Left column: the posterior for (a) the transverse-field Ising model ($L{=}6,\mathcal{M}{=}30$ at $h_z{=}0.8$) and (b) the Stark model ($L{=}21,\mathcal{M}{=}30$ at $h{=}0.8$). Right column: the estimated value with standard deviation over $200$ independent trials for (a) the transverse-field Ising model ($L{=}6,\mathcal{M}{=}30$ within $h_z{\in}[0,2]$) and (b) the Stark model ($L{=}21,\mathcal{M}{=}30$ within $h{\in}[0,1]$). \label{fig:estimate}}
\end{figure}

\end{document}